\newcommand{\nat}{\mathbb{N}}
\newcommand{\integer}{\mathbb{Z}}
\newcommand{\rat}{\mathbb{Q}}
\newcommand{\real}{\mathbb{R}}
\newcommand{\Front}{{\sf Front}} 
\newcommand{\prob}{{\bf Pr}}
\newcommand{\suiv}{p}
\newcommand{\alain}[1]{\textcolor{blue}{#1}}
\title{About Decisiveness of Dynamic\\ Probabilistic Models} 
\titlerunning{About Decisiveness of Dynamic Probabilistic Models}
\author{Alain {Finkel}}{Universit\'e Paris-Saclay, CNRS, 
ENS Paris-Saclay, IUF, Laboratoire Méthodes Formelles,  
91190, Gif-sur-Yvette, France \and \url{https://ens-paris-saclay.fr/alain-finkel/} }{finkel@lsv.fr}{https://orcid.org/0000-0003-2482-6141}{His work has been supported by ANR project BRAVAS (ANR-17-CE40-0028).}
\author{Serge {Haddad}}{Universit\'e Paris-Saclay,  CNRS, 
ENS Paris-Saclay, Laboratoire Méthodes Formelles,  INRIA
91190, Gif-sur-Yvette, France \and \url{http://www.lsv.fr/~haddad/} }{haddad@lsv.fr}{https://orcid.org/0000-0002-1759-1201}{His work has been supported by ANR project MAVeriQ (ANR-20-CE25-0012).}
\author{Lina {Ye}}{Universit\'e Paris-Saclay, CNRS, 
ENS Paris-Saclay, CentraleSupélec, Laboratoire Méthodes Formelles,  
91190, Gif-sur-Yvette, France \and \url{https://www.lri.fr/~linaye/} }{lina.ye@universite-paris-saclay.fr}{https://orcid.org/0000-0002-2217-4752}{}
\authorrunning{A. Finkel et al.} 
\keywords{infinite Markov chain, reachability probability, decisiveness} 
\begin{document}

\maketitle

\begin{abstract}
Decisiveness of infinite Markov chains with respect to some (finite or infinite) target set of states 
is a key property that allows to compute the reachability probability of this set up to an arbitrary precision. 
Most of the existing works assume constant weights for defining the probability of a transition in the considered models.
However numerous  probabilistic modelings require (dynamic) weights that depend on both the current state and the transition.
So we introduce a dynamic probabilistic version
of counter machine (pCM). After establishing that decisiveness is undecidable for pCMs even with constant weights,
we study the decidability of decisiveness
for subclasses of pCM. We show that, without restrictions on dynamic weights, decisiveness is undecidable
with a single state and single counter pCM. On the contrary with polynomial weights, decisiveness becomes decidable
for single counter pCMs under mild conditions. Then we show that  decisiveness of probabilistic Petri nets  (pPNs) with polynomial weights
is undecidable even when the target set is upward-closed unlike the case of constant weights. Finally we prove that
the standard subclass of pPNs with a regular language is decisive with respect to a finite set whatever the kind of weights.
\end{abstract}

\section{Introduction}
\label{sec:introduction}

\noindent
{\bf Probabilistic models.}
Since the 1980's, finite-state Markov chains have been considered for the modeling and analysis 
of probabilistic concurrent finite-state programs~\cite{DBLP:conf/focs/Vardi85}. 
More recently this approach has been extended to the verification of the infinite-state Markov chains 
obtained from probabilistic versions 
of automata extended with  unbounded data (like stacks, channels, counters, clocks). 
For instance in 1997,  \cite{DBLP:conf/tapsoft/IyerN97} started the study of \emph{probabilistic lossy channel systems} (pLCS).
In 2004, the model checking of \emph{probabilistic pushdown automata} (pPDA) 
appeared in many works~\cite{DBLP:conf/lics/EsparzaKM04,DBLP:conf/lics/EsparzaKM05,DBLP:conf/focs/BrazdilEK05,Esparza06} (see \cite{DBLP:journals/fmsd/BrazdilEKK13} for a survey).
\emph{Probabilistic counter machines} (pCM) have also been studied since 2010~\cite{DBLP:journals/jacm/BrazdilKK14}.

\smallskip\noindent
{\bf Computing the probability of reachability.} Here we focus on the problem of \emph{Computing the Reachability Probability up to an arbitrary precision} (CRP). 
There are (at least) two strategies to solve the CRP problem. 

The first one is to consider the Markov chains associated with a particular class of probabilistic models (like pPDA or pPN)
and some specific target sets 
and to exploit the properties of these models to design a CRP-algorithm.
For instance in~\cite{DBLP:journals/fmsd/BrazdilEKK13}, the authors 
exhibit a PSPACE algorithm for pPDA and PTIME algorithms for single-state pPDA and for one-counter automata.

The second one consists in exhibiting a property of Markov chains that yields a generic algorithm
for solving the CRP problem and then looking for models that generate Markov chains that fulfill this property.
\emph{Decisiveness} of Markov chains is such a property and it has been shown that
 pLCS are decisive and that probabilistic Petri nets (pPN) are decisive when the target set is upward-closed~\cite{AbdullaHM07}. 

\smallskip\noindent
{\bf Two limits of the previous approaches.} 
In most of the works, the probabilistic models associate a constant (also called \emph{static}) weight for transitions
and get transition probabilities by normalizing these weights among the enabled transitions in the current state
(except for some semantics of pLCS like in~\cite{DBLP:conf/tapsoft/IyerN97} where transition probabilities depend on the state due to the possibility of message losses). 
This forbids to model phenomena like congestion in networks (resp. performance collapsing in distributed systems) 
when the number of messages (resp. processes) exceeds some threshold leading to an increasing probability of message arrivals (resp. process creations)
before message departures (resp. process terminations). In order to handle them, one needs to consider 
 \emph{dynamic} weights i.e., weights depending on the current state.
 
 Generally given some probabilistic model and some kind of target set of states, 
 it may occur that  some instances of the model are decisive and some others are not.
 This raises the issue of the decidability status of the decisiveness problem.
 Interestingly, the decidability of the decisiveness property has only be studied and shown decidable for pPDA with constant weights~\cite{Esparza06}.


\smallskip\noindent
{\bf Our contributions.}
\begin{itemize}
	\item In order to unify our analysis of decisiveness, we introduce a dynamic probabilistic version of counter machine (pCM)
	and we first establish that decisiveness is undecidable for pCMs even with constant weights.
	\item Then we study the decidability of decisiveness of one-counter pCMS.
        We show that, without restrictions on dynamic weights, decisiveness is undecidable
        for one-counter pCM even with a single state. On the contrary, with polynomial weights, decisiveness becomes decidable
        for a large subclass of one-counter pCMs, called homogeneous probabilistic counter machine (pHM). 
	\item   Then we show that  decisiveness of probabilistic Petri nets  (pPNs) with polynomial weights is undecidable  when the target set is finite or upward-closed (unlike the case of constant weights). 
	Finally we prove that the standard subclass of pPNs with a regular language is decisive with respect to a finite set whatever the kind of weights.

%
	\item Some of our results are not only technically involved but contain new ideas. In particular, 
	the proof of undecidability of decisiveness for pPN with polynomial weights with respect to a finite or upward closed set is based on an original weak simulation of CM. 
	Similarly the model of pHM can be viewed as a dynamic extension of quasi-birth–death processes well-known in the performance evaluation field~\cite{1034625256}.
%
%
\end{itemize}

\smallskip\noindent
{\bf Organisation.} Section~\ref{sec:decisiveness} recalls decisive Markov chains, 
presents the classical algorithm for solving the CRP problem 
and shows that decisiveness is somehow related to recurrence of Markov chains. 
In section~\ref{sec:pCM},  we introduce pCM and show that decisiveness is undecidable for static pCM.  
In section~\ref{sec:oneCM}, we study the decidability status of decisiveness for probabilistic one-counter pCM
and in section~\ref{sec:pPN}, the decidability status of decisiveness for pPN.
Finally in Section~\ref{sec:conclusion}  we conclude and give some perspectives to this work.
All missing proofs can be found in Appendix.

\section{Decisive  Markov chains}
\label{sec:decisiveness}

As usual, $\nat$ and $\nat^*$ denote respectively the set of non negative integers and the set of positive integers.
The notations 
$\rat$, $\rat_{\geq 0}$ and $\rat_{>0}$ denote the set of 
 rationals, non-negative rationals and positive rationals.
Let $F\subseteq E$. When there is no ambiguity about $E$, $\overline{F}$ will denote $E\setminus F$.
\subsection{Markov chains: definitions and properties}

\noindent
{\bf Notations.} 
A set $S$ is \emph{countable} if there exists an injective function from $S$ 
to  the set of natural numbers: hence it could be finite or countably infinite.
Let $S$ be a countable set of elements called states.
Then $Dist(S) = \{\Delta : S \rightarrow \rat_{\geq 0} \mid \sum_{s\in S}\Delta(s)=1\}$  is the set of \emph{rational distributions} over $S$.
Let $\Delta \in Dist(S)$, then $Supp(\Delta)=\Delta^{-1}(\rat_{>0})$.
Let $T\subseteq S$, then  $S\setminus T$ will also be denoted $\overline{T}$.

\begin{definition}[(Effective) Markov chain]
A \emph{Markov chain} $\mathcal M=(S,\suiv)$ is a tuple where:
\begin{itemize}
	\item  $S$ is a countable set of states,
	\item $\suiv$ is the transition function from $S$ to $Dist(S)$;
%
%
\end{itemize}
When for all $s\in S$, $Supp(\suiv(s))$ is finite and the function $s \mapsto \suiv(s)$ is computable,
one says that  $\mathcal M$ is \emph{effective}.
\end{definition}
When $S$ is countably infinite, we say that $\mathcal M$ is \emph{infinite} and we sometimes identify $S$ with $\nat$.
We also denote $\suiv(s)(s')$ by $\suiv(s,s')$
and $\suiv(s,s')>0$ by
$s\xrightarrow{\suiv(s,s')} s'$.  A Markov chain is also viewed as a transition system
whose transition relation $\rightarrow$ is defined by $s\rightarrow s'$ if $\suiv(s,s')>0$. 

%
%

\begin{example}
Let $\mathcal M_1$ be the Markov chain of Figure~\ref{fig:rw}.
In any state $i>0$, the probability for going to the ``right'', $p(i,i+1)=\frac{f(i)}{f(i)+g(i)}$ and  for going to the ``left'', $p(i,i-1)=\frac{g(i)}{f(i)+g(i)}$.
In state $0$, one goes to  $1$ with probability 1. $\mathcal M_1$ is effective if the functions $f$ and $g$ are
computable.
\end{example}

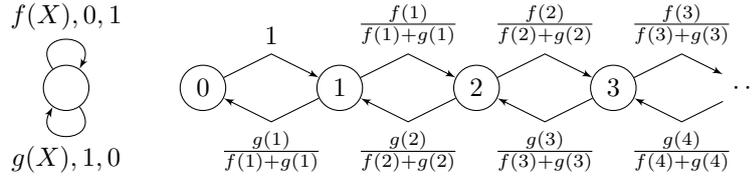
\begin{figure}
\begin{center}
  \begin{tikzpicture}[node distance=2cm,->,auto,-latex,scale=0.9]
     \path (-2,0) node[minimum size=0.6cm,draw,circle,inner sep=2pt] (q5) {}; 
    
   \path (0,0) node[minimum size=0.6cm,draw,circle,inner sep=2pt] (q0) {$0$};
   \path (2,0) node[minimum size=0.6cm,draw,circle,inner sep=2pt] (q1) {$1$};
   \path (4,0) node[minimum size=0.6cm,draw,circle,inner sep=2pt] (q2) {$2$};
   \path (6,0) node[minimum size=0.6cm,draw,circle,inner sep=2pt] (q3) {$3$};
   \path (8,0) node[] (q4) {$\cdots$};

      

    \draw[arrows=-latex'] (q0)-- (1,0.5) node[pos=1,above] {$1$}--(q1) ;
    \draw[arrows=-latex'] (q1)-- (1,-0.5) node[pos=1,below] {$\frac{g(1)}{f(1)+g(1)}$}--(q0) ;

    \draw[arrows=-latex'] (q1)-- (3,0.5) node[pos=1,above] {$\frac{f(1)}{f(1)+g(1)}$}--(q2) ;
    \draw[arrows=-latex'] (q2)-- (3,-0.5) node[pos=1,below] {$\frac{g(2)}{f(2)+g(2)}$}--(q1) ;

    \draw[arrows=-latex'] (q2)-- (5,0.5) node[pos=1,above] {$\frac{f(2)}{f(2)+g(2)}$}--(q3) ;
    \draw[arrows=-latex'] (q3)-- (5,-0.5) node[pos=1,below] {$\frac{g(3)}{f(3)+g(3)}$}--(q2) ;

    \draw[arrows=-latex'] (q3)-- (7,0.5) node[pos=1,above] {$\frac{f(3)}{f(3)+g(3)}$}--(q4) ;
    \draw[arrows=-latex'] (q4)-- (7,-0.5) node[pos=1,below] {$\frac{g(4)}{f(4)+g(4)}$}--(q3) ;

   \draw[-latex'] (q5) .. controls +(125:30pt) and +(55:30pt) .. (q5) node[pos=.5,above] {$f(X),0,1$};
   \draw[-latex'] (q5) .. controls +(-55:30pt) and +(-125:30pt) .. (q5) node[pos=.5,below] {$g(X),1,0$};

  \end{tikzpicture}
\end{center}
\caption{A pCM and its Markov chain $\mathcal M_1$ with for all $n\in \nat$, $0<f(n)$ and $0<g(n)$.}
\label{fig:rw}
\end{figure}
We denote $\rightarrow^*$,
the reflexive and transitive closure of $\rightarrow$ and we say that
$s'$ is  \emph{reachable from $s$} if  $s \rightarrow^* s'$. 
We say that a subset $A \subseteq S$ is \emph{reachable} from $s$ if some $s'\in A$ is reachable from $s$ and we denote $s \rightarrow^*A$.
Let us remark that every finite path of 
$\mathcal M$ can be extended into (at least) one infinite path.

\smallskip
Given an initial state $s_0$, the \emph{sampling} of a Markov chain $\mathcal M$ is an \emph{infinite
random sequence of states} (i.e., a path) $\sigma=s_0s_1\ldots$ such that for all $i\geq 0$,
$s_i\rightarrow s_{i+1}$. As usual,  the corresponding $\sigma$-algebra
is generated by the finite prefixes of infinite paths and the probability of a measurable subset  $\Pi$  of infinite paths, given an initial state $s_0$, is denoted $\prob_{\mathcal M,s_0}(\Pi)$. In particular denoting $s_0\ldots s_nS^\omega$ the set of infinite paths with  $s_0\ldots s_n$ as prefix
$\prob_{\mathcal M,s_0}(s_0\ldots s_nS^\omega)=\prod_{0 \leq i<n} \suiv(s_i,s_{i+1})$.
%

\smallskip
\noindent {\bf Notations.}
From now on,  {\bf G} (resp. {\bf F}, {\bf X}) denotes the always (resp. eventual, next)
operator of LTL,  and {\bf E} the existential operator of CTL$^*$~\cite{ModelChecking08}. 

\smallskip
Let $A\subseteq S$. We say that $\sigma$ 
\emph{reaches} $A$ if $\exists i\in \nat\ s_i\in A$ and that  $\sigma$ 
\emph{visits} $A$ if $\exists i>0\ s_i\in A$. 
The probability that starting from $s_0$, the path $\sigma$ reaches (resp. visits) $A$ will be denoted
by $\prob_{\mathcal M,s_0}({\bf F}  A)$ (resp. $\prob_{\mathcal M,s_0}({\bf XF}  A)$).

\smallskip
The  next definition states qualitative and quantitative properties of a Markov chain.
\begin{definition}[Irreducibility, recurrence, transience] Let $\mathcal M=(S,\suiv)$ be a Markov chain and  $s \in S$. Then:
\begin{itemize}
	\item $\mathcal M$ is \emph{irreducible} if for all $s,s'\in S$, $s\rightarrow^* s'$;
	\item $s$ is \emph{recurrent} if $\prob_{\mathcal M,s}({\bf XF}  \{s\})=1$
	otherwise $s$ is \emph{transient}.
%
\end{itemize}
\end{definition}

The next proposition states that in an irreducible Markov chain, all states are in the same category~\cite{KSK76}.
\begin{proposition} Let $\mathcal M=(S,\suiv)$ be an  irreducible Markov chain and $s,s' \in S$. Then
$s$ is recurrent if and only if $s'$ is recurrent.
\end{proposition}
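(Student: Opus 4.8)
The plan is to show that recurrence is invariant along the communication relation $\rightarrow^*$; since irreducibility makes all states communicate, this immediately forces every state into the same category. The convenient bridge between the path-based definition of recurrence and a summable quantity is the \emph{Green function}. Writing $\suiv^{(n)}(x,y)$ for the probability of moving from $x$ to $y$ in exactly $n$ steps (so $\suiv^{(1)}=\suiv$ and $\suiv^{(0)}(x,y)=1$ iff $x=y$), I set $G(x,y)=\sum_{n\geq 0}\suiv^{(n)}(x,y)$, which is the expected number of times the chain started at $x$ sits in $y$. The whole argument then reduces to comparing $G(s,s)$ with $G(s',s')$.

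First I would establish the characterization: $s$ is recurrent iff $G(s,s)=\infty$. Let $\rho_s=\prob_{\mathcal M,s}({\bf XF}\{s\})$ denote the return probability, so that $s$ is recurrent exactly when $\rho_s=1$. By the strong Markov property applied at the successive return times to $s$, the total number of visits to $s$ (counting the visit at time $0$) starting from $s$ is geometrically distributed: the probability of at least $j$ visits is $\rho_s^{\,j-1}$. Hence its expectation, which equals $G(s,s)$, is $1/(1-\rho_s)$ when $\rho_s<1$ and $+\infty$ when $\rho_s=1$. This step, namely justifying the geometric law of the visit counts via the strong Markov property, is the technical heart of the argument; everything after it is elementary.

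Finally I would exploit irreducibility. Since $s\rightarrow^* s'$ and $s'\rightarrow^* s$, there are integers $m,k\geq 0$ with $a:=\suiv^{(m)}(s,s')>0$ and $b:=\suiv^{(k)}(s',s)>0$. Keeping only the single intermediate term $(s',s')$ in the Chapman--Kolmogorov expansion gives, for every $n$, the inequality $\suiv^{(m+n+k)}(s,s)\geq a\,\suiv^{(n)}(s',s')\,b$. Summing over $n\geq 0$ yields $G(s,s)\geq ab\,G(s',s')$, so $G(s',s')=\infty$ implies $G(s,s)=\infty$; that is, $s'$ recurrent implies $s$ recurrent. Swapping the roles of $s$ and $s'$ gives the converse, and together with the characterization of the previous step this proves the equivalence. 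The only delicate point is the probabilistic characterization; the comparison is then just one line of Chapman--Kolmogorov.
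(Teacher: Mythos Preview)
Your argument is correct and is essentially the standard textbook proof via the Green function and Chapman--Kolmogorov. Note, however, that the paper does not actually supply a proof of this proposition: it merely states the result and attributes it to the reference~\cite{KSK76}. There is therefore nothing in the paper to compare your proof against; what you wrote is precisely the classical proof one finds in that reference and similar texts.
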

Thus an irreducible Markov chain will be said transient or 
recurrent depending on the category of its states (all states are in the same category). 
In the remainder of this section,
we will relate this category with techniques for computing reachability
probabilities.

\begin{example}
$\mathcal M_1$ of Figure~\ref{fig:rw} is clearly irreducible. Let us define $p_n=\frac{f(n)}{f(n)+g(n)}$. Then (see Proposition \ref{appendix} in the Appendix), $\mathcal M_1$ is  recurrent if and only if  
$\sum_{n\in \nat} \prod_{1\leq m<n} \rho_m =\infty$ with $\rho_m=\frac{1-p_m}{p_m}$, and when transient, the probability that starting from $i$ the random path reaches
$0$ is equal to $\frac{\sum_{i\leq n} \prod_{1\leq m<n} \rho_m}{\sum_{n\in \nat} \prod_{1\leq m<n} \rho_m}$.

\end{example}

\subsection{Decisive Markov chains}

%
One of the goals of the quantitative analysis of  infinite Markov 
chains is to approximately compute reachability probabilities. 
Let us formalize it. Given a finite representation 
of a subset $A\subseteq S$,
one says that this representation is \emph{effective} if one can decide the membership problem for $A$. 
With a slight abuse of language, we
identify $A$ with any effective representation of $A$.


\smallskip
\centerline{ {\bf \small{The Computing of Reachability Probability  (CRP) problem}} }
\begin{center}
\fbox{
\begin{minipage}{0.95\textwidth}

\noindent
$\bullet$ Input: an effective Markov chain $\mathcal M$, an (initial) state $s_0$, 
an effective subset of states $A$, and a rational $\theta>0$.

\noindent
$\bullet$ 
 Output: an interval $[low,up]$ such that $up-low\leq \theta$ and $\prob_{\mathcal M,s_0}({\bf F}  A) \in [low,up]$.
\end{minipage}
}
\end{center}


In finite Markov chains, there is a well-known algorithm for computing exactly the reachability probabilities in
 polynomial time~\cite{ModelChecking08}. In infinite Markov chains, there are (at least) two possible research directions:
(1) either using the specific features of a formalism to design such a CRP algorithm~\cite{Esparza06}, (2) or requiring a
supplementary property on Markov chains in order to design an ``abstract'' algorithm, then verifying that given
a formalism this property is satisfied and finally transforming this algorithm into a concrete one.
\emph{Decisiveness}-based approach follows the second direction~\cite{AbdullaHM07}.
In words, decisiveness w.r.t. $s_0$ and $A$ means that almost surely the random path $\sigma$
starting from $s_0$ will reach $A$ or some state $s'$ from which $A$ is unreachable.
%
\begin{definition}  A Markov chain $\mathcal M$ is \emph{decisive} w.r.t.  $s_0\in S$ and $A\subseteq S$ if:
$$\prob_{\mathcal M,s_0}({\bf G}  (\overline{A}\cap {\bf EF} A))=0$$
\end{definition}
%
%
%
Then under the hypotheses of decisiveness w.r.t. $s_0$ and $A$ and decidability of the reachability problem w.r.t. $A$, 
Algorithm~\ref{algo:prob-reach-dec} solves the CRP problem.

Let us explain  Algorithm~\ref{algo:prob-reach-dec}. If $A$ is unreachable from $s_0$ it returns the singleton interval $[0,0]$.
Otherwise it maintains a lower bound $pmin$ (initially 0) and an upper bound $pmax$ (initially 1)
of the reachability probability and builds some prefix of the infinite execution tree of $\mathcal M$. It also maintains the probability to reach
a vertex in this tree. There are three possible cases when examining the state $s$ associated with the current vertex along a path of probability $q$:
(1) either $s\in A$ and the lower bound is incremented by  $q$,
(2) either $A$ is unreachable from $s$ and the upper bound is decremented by $q$,
(3) or it extends the prefix of the tree by the successors of $s$.
The lower bound always converges to the searched probability while due to the decisiveness property, the upper bound also converges to it
ensuring termination of the algorithm.
%
\begin{proposition}[\!\!\cite{AbdullaHM07}]\label{prop:prob-reach-dec}
Algorithm~\ref{algo:prob-reach-dec} terminates and computes an interval 
of length at most $\theta$ containing $\prob_{\mathcal M,s_0}({\bf F} A)$ when applied to a decisive Markov chain $\mathcal M$
w.r.t. $s_0$ and $A$ with a decidable reachability problem w.r.t. $A$.
\end{proposition}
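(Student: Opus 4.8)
The plan is to exhibit a loop invariant that sandwiches the target probability and then let decisiveness close the gap. Write $\widehat A = \{s \in S : \neg(s \rightarrow^* A)\}$ for the set of states from which $A$ is unreachable; by the assumed decidability of reachability w.r.t.\ $A$, membership in $A$ and in $\widehat A$ is decidable, and since $\mathcal M$ is effective each expanded vertex has finitely many computable successors, so every test and expansion performed by the algorithm is effective. First I would model the data maintained by the algorithm as a finite prefix $T$ of the execution tree of $\mathcal M$ rooted at $s_0$, whose leaves are of three kinds: \emph{accepting} (case~1, labelled by a state in $A$), \emph{rejecting} (case~2, labelled by a state in $\widehat A$), and \emph{frontier} (case~3, not yet expanded). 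Writing $q(v)$ for the probability of the branch from the root to a vertex $v$, the bookkeeping invariant to prove by induction on the number of expansion steps is $pmin = \sum_{v \text{ accepting}} q(v)$ and $1 - pmax = \sum_{v \text{ rejecting}} q(v)$, so that the frontier carries exactly $pmax - pmin = \sum_{v \text{ frontier}} q(v)$; it is preserved because expanding a frontier vertex $v$ conserves $\sum_{w \text{ child of } v} q(w) = q(v)$, whereas closing $v$ moves $q(v)$ into $pmin$ or out of $pmax$.

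Next I would establish the sandwiching inequality $pmin \le \prob_{\mathcal M,s_0}({\bf F} A) \le pmax$. The cylinders of infinite paths through the accepting leaves are pairwise disjoint and each is contained in ${\bf F} A$, which yields the lower bound. For the upper bound, the crucial observation is that a path reaching a rejecting leaf has visited $A$ neither \emph{before} that leaf (otherwise it would already have been closed as accepting) nor \emph{after} it (since $A$ is unreachable from any state of $\widehat A$); hence the cylinders over rejecting leaves are disjoint subsets of $\overline{{\bf F} A}$, giving $1 - pmax \le \prob_{\mathcal M,s_0}(\overline{{\bf F} A})$, i.e.\ $\prob_{\mathcal M,s_0}({\bf F} A) \le pmax$.

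Termination is where decisiveness does the work. The gap $pmax - pmin$ equals the mass $\prob_{\mathcal M,s_0}(U_k)$ of the union $U_k$ of the frontier cylinders after the $k$-th expansion. Since expanding a frontier vertex replaces its cylinder by the subcylinders of its non-closed children, $U_{k+1} \subseteq U_k$, so $(U_k)_k$ is a decreasing sequence of events. Provided the exploration is fair, meaning every frontier vertex is eventually expanded, a path leaves the frontier exactly when it is closed at an $A$- or $\widehat A$-vertex, so $\bigcap_k U_k$ is precisely the set of paths that forever avoid both $A$ and $\widehat A$, i.e.\ that remain at every step in $\overline A \cap {\bf EF} A$ (the states off $A$ from which $A$ is still reachable): $\bigcap_k U_k = {\bf G}(\overline A \cap {\bf EF} A)$. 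Continuity of the probability measure from above together with the decisiveness hypothesis $\prob_{\mathcal M,s_0}({\bf G}(\overline A \cap {\bf EF} A)) = 0$ then gives $pmax - pmin = \prob_{\mathcal M,s_0}(U_k) \to 0$; hence the guard $pmax - pmin \le \theta$ is met after finitely many iterations, and by the sandwiching inequality the returned interval has length at most $\theta$ and contains $\prob_{\mathcal M,s_0}({\bf F} A)$.

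The hard part is exactly this last step: recognising the residual frontier mass as the probability of a nested decreasing family of events whose limit coincides with the decisiveness event ${\bf G}(\overline A \cap {\bf EF} A)$, so that decisiveness --- and no weaker recurrence-type property --- forces the gap to vanish. A secondary subtlety is the fairness of the expansion strategy, without which $\bigcap_k U_k$ could strictly contain ${\bf G}(\overline A \cap {\bf EF} A)$ and convergence to $0$ could fail; the sandwiching invariant and the bookkeeping of $pmin$ and $pmax$, by contrast, are a routine induction.
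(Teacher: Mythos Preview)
Your proposal is correct and follows the same approach as the paper's informal explanation preceding the proposition (the paper itself defers the formal proof to~\cite{AbdullaHM07}): maintain the sandwich $pmin \le \prob_{\mathcal M,s_0}({\bf F} A) \le pmax$ via the execution-tree prefix, identify the gap with the frontier mass, and let decisiveness drive it to zero. The only point you leave implicit is the second disjunct of the loop guard, $pmin=0$: once past the initial reachability test we have $\prob_{\mathcal M,s_0}({\bf F} A)>0$, and since your argument gives $pmin \to \prob_{\mathcal M,s_0}({\bf F} A)$, the condition $pmin>0$ is also eventually satisfied.
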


Algorithm~\ref{algo:prob-reach-dec} can be applied to probabilistic Lossy Channel Systems (pLCS) since they are decisive (Corollary 4.7 in \cite{AbdullaHM07} and see \cite{AbdullaR03} for the first statement) and reachability is decidable in LCS \cite{AbdullaJ96}.  It can be also applied to pVASSs w.r.t. upward closed sets
 because Corollary 4.4 in \cite{AbdullaHM07} states that  pVASSs are decisive w.r.t. \emph{upward closed sets}. Observe that these results hold due to restrictions on transition probabilities that we will discuss later on.

\noindent
{\bf Observations.} The test $pmin =0$ is not necessary but adding it avoids to return 0 as lower
bound, which would be inaccurate since entering this loop means that $A$ is reachable from $s_0$.
Extractions from the front are performed in a way that the execution tree will be covered (for instance by a breadth first exploration). 

\begin{algorithm2e}
  \LinesNumbered
  \DontPrintSemicolon
  \SetKwFunction{CompProb}{CompProb}
  \SetKwFunction{PosCov}{PosCov}
  \SetKwFunction{Insert}{Insert}
  \SetKwFunction{Extract}{Extract}

{\CompProb}$(\mathcal M,s_0,A,\theta)$\;

 \lIf {{\bf not} $s_0 \rightarrow^* A$}{ \Return$(0,0)$} 
$pmin \leftarrow 0$; $pmax \leftarrow 1$; $\Front \leftarrow \emptyset$\;
 $\Insert(\Front,(s_0,1))$\;
 \While{$pmax-pmin >\theta$ {\bf or} $pmin =0$}
 { 
    $(s,q) \leftarrow \Extract(\Front)$\;
    \lIf {$s\in A$}{$pmin \leftarrow pmin+q$} 
    \lElseIf  {{\bf not}  $s \rightarrow^* A$}{$pmax \leftarrow pmax-q$} 
    \Else
     {
         \For {$s'\! \in Supp(\suiv(s))$} {$\Insert(\Front,(s',q \suiv(s,s'))$} 
      }
 }   
 \Return$(pmin,pmax)$

\caption{Framing the reachability probability in decisive Markov chains}
 \label{algo:prob-reach-dec}
 \end{algorithm2e}

\noindent

Let $\mathcal M$ be a Markov chain. 
One denotes $Post_{\mathcal M}^*(A)$, the set of states
that can  be reached from some state of $A$ and $Pre_{\mathcal M}^*(A)$, the set of states
that can reach $A$. 
%
%
While decisiveness has been used in several contexts including uncountable probabilistic systems~\cite{BertrandBBC18},
its relation with standard properties of Markov chains has not been investigated. This is the goal
of the next definition and proposition.

\begin{definition}
Let $\mathcal M$ be a Markov chain, $s_0\in S$ and $A\subseteq S$ such that $s_0 \not\in A\cup \overline{Pre_{\mathcal M}^*(A)}$.
The Markov chain $\mathcal M_{s_0,A}=(S_{s_0,A},\suiv_{s_0,A})$ is defined as follows:
\begin{itemize}
	\item $S_{s_0,A}$ is the union of (1) the smallest set containing $s_0$ and such that for all $s\in S_{s_0,A}$ and $s' \not\in A\cup \overline{Pre_{\mathcal M}^*(A)}$
	with $s\rightarrow s'$, one have:\\ $s'\in S_{s_0,A}$
	and (2) $\{s_\bot\}$ where $s_\bot$ is a new state; 
	\item for all $s,s'\neq s_\bot$, $\suiv_{s_0,A}(s,s')=\suiv(s,s')$
	and $\suiv_{s_0,A}(s,s_\bot)=\sum_{s' \in A\cup \overline{Pre_{\mathcal M}^*(A)}}\suiv(s,s')$; 
	\item $\suiv_{s_0,A}(s_\bot,s_0)=1$.
\end{itemize}
\end{definition}
The equivalence between decisiveness of $\mathcal M$ w.r.t. $s_0\in S$ and $A\subseteq S$ and recurrence of  $\mathcal M_{s_0,A}$ allows to apply standard
criteria for recurrence in order to check decisiveness. For instance we will use the criterion of the Markov chain of Figure~\ref{fig:rw} in Section~\ref{sec:oneCM}.

\begin{proposition}\label{rec-dec-rec}
Let $\mathcal M=(S,\suiv)$ be a Markov chain, $s_0\in S$ and $A\subseteq S$ such that $s_0 \not\in A\cup \overline{Pre_{\mathcal M}^*(A)}$. Then $\mathcal M_{s_0,A}$
is irreducible.
Furthermore  $\mathcal M$ is decisive w.r.t. $s_0$ and $A$ if and only if $\mathcal M_{s_0,A}$ is recurrent.
\end{proposition}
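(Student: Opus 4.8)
## Proof Plan

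The plan is to establish the two assertions separately: first irreducibility of $\mathcal M_{s_0,A}$, then the equivalence between decisiveness of $\mathcal M$ and recurrence of $\mathcal M_{s_0,A}$. The key observation driving the whole construction is that $\mathcal M_{s_0,A}$ is obtained from $\mathcal M$ by collapsing the ``finished'' states---those in $A$ (target reached) or in $\overline{Pre_{\mathcal M}^*(A)}$ (target forever unreachable)---into a single sink $s_\bot$, and then redirecting $s_\bot$ back to $s_0$ with probability $1$. This redirection is exactly the device that makes the chain irreducible and turns the question ``does the random path get stuck wandering in $\overline{A}\cap{\bf EF}A$ forever'' into the question ``is $s_\bot$ (equivalently $s_0$) recurrent''.

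For \textbf{irreducibility}, I would argue that every state of $S_{s_0,A}\setminus\{s_\bot\}$ is by definition of the smallest set reachable from $s_0$, hence reachable from $s_0$ inside $\mathcal M_{s_0,A}$ (the transitions among non-$s_\bot$ states coincide with those of $\mathcal M$). Conversely, from any such state $s$ one can reach $s_\bot$: since $s\in Pre_{\mathcal M}^*(A)$ but $s\notin A$, there is an $\mathcal M$-path from $s$ to $A$; following it, either we stay among states of $S_{s_0,A}$ until we hit a state in $A\cup\overline{Pre_{\mathcal M}^*(A)}$, at which point the redirected transition carries positive probability into $s_\bot$, or the path leaves $S_{s_0,A}$ precisely by entering $A\cup\overline{Pre_{\mathcal M}^*(A)}$, which again corresponds to a positive-probability edge to $s_\bot$. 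Finally $s_\bot\rightarrow s_0\rightarrow^* s$ closes the loop, so every pair of states communicates. One must check that $s$ always has such an outgoing edge to $s_\bot$ when its $\mathcal M$-successors include a finished state, which follows from the definition $\suiv_{s_0,A}(s,s_\bot)=\sum_{s'\in A\cup\overline{Pre_{\mathcal M}^*(A)}}\suiv(s,s')$.

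For the \textbf{equivalence}, I would relate infinite paths of $\mathcal M_{s_0,A}$ to infinite paths of $\mathcal M$ started at $s_0$. A path in $\mathcal M$ witnesses the event ${\bf G}(\overline A\cap{\bf EF}A)$ iff it stays forever in $\overline{A}\cap Pre_{\mathcal M}^*(A)=\overline{A}\cap(S\setminus\overline{Pre_{\mathcal M}^*(A)})$, i.e. it never enters $A\cup\overline{Pre_{\mathcal M}^*(A)}$. Such paths correspond bijectively and measure-preservingly to paths of $\mathcal M_{s_0,A}$ that never reach $s_\bot$, because on the non-$s_\bot$ states the transition probabilities agree. Hence $\prob_{\mathcal M,s_0}({\bf G}(\overline A\cap{\bf EF}A))$ equals the $\mathcal M_{s_0,A}$-probability, starting from $s_0$, of never visiting $s_\bot$. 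Now $s_\bot$ is recurrent iff, after the forced step $s_\bot\to s_0$, the path almost surely returns to $s_\bot$, i.e. iff this ``never visit $s_\bot$'' probability is $0$. By the preceding proposition on irreducible chains, recurrence of $s_\bot$ is equivalent to recurrence of the whole chain. Chaining these equivalences gives: $\mathcal M$ is decisive w.r.t. $s_0,A$ (the ${\bf G}(\overline A\cap{\bf EF}A)$-probability is $0$) iff $s_\bot$ is recurrent iff $\mathcal M_{s_0,A}$ is recurrent.

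The \textbf{main obstacle} is the measure-theoretic bookkeeping in the equivalence step: one must verify that the correspondence between $\mathcal M$-paths avoiding the finished set and $\mathcal M_{s_0,A}$-paths avoiding $s_\bot$ genuinely preserves probability on the generating cylinder sets, and that the event ${\bf G}(\overline A\cap{\bf EF}A)$ is correctly identified with ``forever avoiding $A\cup\overline{Pre_{\mathcal M}^*(A)}$''. The latter uses that $\overline{A}\cap{\bf EF}A$ holds at a state exactly when it lies in $\overline{A}\cap Pre_{\mathcal M}^*(A)$, and that ${\bf G}$ of this set means the path never exits it; here it is essential that $Pre_{\mathcal M}^*(A)$ is \emph{closed} under the relevant transitions in the appropriate sense, so that a path staying in $Pre_{\mathcal M}^*(A)$ maps cleanly into the non-$s_\bot$ part of $\mathcal M_{s_0,A}$.
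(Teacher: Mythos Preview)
Your proposal is correct and follows essentially the same line as the paper's proof: establish irreducibility by showing every state reaches $s_\bot$ (via a shortest path to $A$) and $s_\bot$ reaches every state through $s_0$, then obtain the equivalence by identifying the event ${\bf G}(\overline A\cap{\bf EF}A)$ in $\mathcal M$ with the event of never reaching $s_\bot$ in $\mathcal M_{s_0,A}$, whose probability is zero iff $s_\bot$ (hence the irreducible chain) is recurrent. The paper compresses this into a terse chain of iff's (${\bf XF}\{s_\bot\}$ from $s_\bot$ iff ${\bf F}\{s_\bot\}$ from $s_0$ iff ${\bf F}(A\cup\overline{Pre_{\mathcal M}^*(A)})$ in $\mathcal M$ iff decisiveness), while you spell out the measure-preserving correspondence on cylinders more explicitly.
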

\begin{proof}
Let $s\in S_{s_0,A} \setminus \{s_\bot\}$. Then $s$ is reachable from $s_0$ and $A$ is reachable from $s$ in $\mathcal M$
implying that $s\rightarrow^*s_\bot$ in  $\mathcal M_{s_0,A}$ (using a shortest path for reachability). Since
$s_\bot \rightarrow s_0$, $s_\bot \rightarrow^* s$. Thus $\mathcal M_{s_0,A}$ is irreducible.

\noindent
$\mathcal M_{s_0,A}$ is recurrent iff $\prob_{\mathcal M_{s_0,A},s_\bot}({\bf XF}\{s_\bot\})=1$ iff 
$\prob_{\mathcal M_{s_0,A},s_0}({\bf F}\{s_\bot\})=1$ iff
$\prob_{\mathcal M,s_0}({\bf F}A\cup \overline{Pre_{\mathcal M}^*(A)})=1$ iff 
$\mathcal M$ is decisive w.r.t. $s_0$ and $A$.
%
%
\end{proof}

\section{Probabilistic counter machines}
\label{sec:pCM}

We now introduce \emph{probabilistic Counter Machines (pCM)} in order to study
the decidability of the decisiveness property w.r.t. several relevant subclasses
of pCM.


\begin{definition}[pCM]
A \emph{probabilistic counter machine (pCM)} is a  tuple $\mathcal C= (Q,P,\Delta,W)$ 
where:
\begin{itemize}
	\item $Q$ is a finite set of control states;
	\item $P=\{p_1,\ldots,p_d\}$ is a finite set of counters (also called places);
	\item $\Delta=\Delta_0 \uplus \Delta_1$ 
	where $\Delta_0$ is a finite subset of $Q \times P  \times \nat^d \times Q$\\
	and $\Delta_1$ is a finite subset of $Q \times \nat^d\times \nat^d \times Q$;
	\item $W$ is a computable function from $\Delta  \times \nat^d$ to $\nat^*$.
%
\end{itemize}
\end{definition}

\noindent
{\bf Notations.} A transition $t\in \Delta_0$ is denoted $t=(q^-_t,p_t,\mathbf{Post}(t),q^+_t)$ and also $q^-_t\xrightarrow{p_t,\mathbf{Post}(t)} q^+_t$.
 A transition $t\in \Delta_1$ is denoted $t=(q^-_t,\mathbf{Pre}(t),\mathbf{Post}(t),q^+_t)$
 and also $q^-_t\xrightarrow{\mathbf{Pre}(t),\mathbf{Post}(t)} q^+_t$.
 Let $t$ be a transition of $\mathcal C$. Then $W(t)$ is the function from $\nat^d$ to $\rat_{>0}$
defined by $W(t)(\mathbf{m})=W(t,\mathbf{m})$.  A polynomial is  \emph{positive}
if all its coefficients are non-negative 
and there is a positive constant term.
When for all $t\in T$, $W(t)$ is a positive polynomial whose variables are the counters,
we say that  $\mathcal C$ is a \emph{polynomial} pCM.

\smallskip
A \emph{configuration} of $\mathcal C$ is an item of $Q\times \nat^d$. Let $s=(q,\mathbf{m})$ be a configuration and
$t=(q^-_t,p_t,\mathbf{Post}(t),q^+_t)$ be a transition in $\Delta_0$. Then $t$ is \emph{enabled} in $s$ if $\mathbf{m}(p_t)=0$ and $q=q^-_t$; its \emph{firing}
 leads to the configuration $(q^+_t,\mathbf{m}+\mathbf{Post}(t))$.
Let $t=(q^-_t,\mathbf{Pre}(t),\mathbf{Post}(t),q^+_t)\in \Delta_1$. Then $t$ is \emph{enabled} in $s$ if $\mathbf{m}\geq \mathbf{Pre}(t)$ and $q=q^-_t$;  its \emph{firing}
leads to the configuration $s'=(q^+_t,\mathbf{m}-\mathbf{Pre}(t)+\mathbf{Post}(t))$. One denotes the configuration change by:
$s \xrightarrow{t} s'$. One denotes $En(s)$, the set of transitions enabled in $s$ and $Weight(s)=\sum_{t \in En(s)} W(t,\mathbf{m})$. 
Let $\sigma=t_1\ldots t_n$
be a sequence of transitions. We define the enabling and the firing of $\sigma$ by induction.
The empty sequence is always enabled in   $s$ and its firing leads to $s$.
When $n>0$,   $\sigma$ is enabled if $s\xrightarrow{t_1}s_1$
and $t_2  \ldots t_n$ is enabled in $s_1$. The firing of $\sigma$ leads to the configuration
reached by $t_2  \ldots t_n$ from $s_1$. A configuration $s$ is \emph{reachable} from some $s_0$
if there is a firing sequence $\sigma$ that reaches $s$ from $s_0$.  When $Q$ is a singleton, one omits the control states in the definition of transitions and configurations.

%
%

\smallskip
We now provide the semantic of a pCM as a countable Markov chain.

\begin{definition} Let $\mathcal C$ be a pCM. Then the Markov chain $\mathcal M_\mathcal C=(S,p)$ is defined by:
\begin{itemize}
	\item $S=Q\times \nat^d$;
	\item For all $s=(q,\mathbf{m})\in S$, if $En(s)=\emptyset$ then $p(s,s)=1$.
	Otherwise for all $s'\in S$: $$p(s,s')=Weight(s)^{-1}\sum_{s \xrightarrow{t} s'} W(t,\mathbf{m})$$
\end{itemize}

\end{definition}

\noindent
{\bf Notation.} Let $s\in S$.
For sake of clarity, $\prob_{\mathcal C,s}$ will denote $\prob_{\mathcal M_\mathcal C,s}$.

For establishing the undecidability results, we will reduce an undecidable problem related to \emph{counter programs}, which are a variant of CM.
Let us recall that a \emph{$d$-counter program} $\mathcal P$ is defined by a set of $d$ counters $\{c_1,\ldots,c_d\}$ and a set of $n+1$ instructions labelled by $\{0,\ldots ,n\}$, 
where for all $i<n$, the instruction $i$ is of type
\begin{itemize}
	\item either (1)  $c_j \leftarrow c_j+1; \mathbf{ goto~} i'$ 
	with $1\leq j\leq d$ and $0\leq i' \leq n$  
	\item or (2) $\mathbf{if~} c_j>0 \mathbf{~then~} c_j \leftarrow c_j-1; \mathbf{goto~} i'$, $\mathbf{else\ goto~} i''$
	with $1\leq j\leq d$ and $0\leq i',i'' \leq n$ 
\end{itemize}
and the instruction
$n$ is $\mathbf{halt}$. The program starts at instruction $0$ and halts if it reaches the instruction $n$.

The halting problem for  two-counter programs asks, given a two-counter program $\mathcal P$ and initial values of counters, 
whether $\mathcal P$ eventually halts. It is undecidable~\cite{Minsky67}.
We introduce a subclass of  two-counter programs that we call \emph{normalized}.
A normalized two-counter program $\mathcal P$ starts by resetting its counters and, on termination, resets its counters before halting.

\noindent
{\bf Normalized two-counter program.}
The first two 
instructions of a normalized two-counter program reset counters $c_1,c_2$ as follows:
\begin{itemize}[nosep]
	\item$0: \mathbf{~if~} c_1>0 \mathbf{~then~} c_1 \leftarrow c_1-1; \mathbf{goto~} 0$ $\mathbf{else\ goto~} 1$
	\item$1: \mathbf{~if~} c_2>0 \mathbf{~then~} c_2 \leftarrow c_2-1; \mathbf{goto~} 1$ $\mathbf{else\ goto~} 2$
\end{itemize}
The last three 
instructions of a normalized  two-counter program are:
\begin{itemize}[nosep]
	\item$n\!-\!2: \mathbf{~if~} c_1>0 \mathbf{~then~} c_1 \leftarrow c_1-1; \mathbf{goto~} n\!-\!2$ $\mathbf{else\ goto~} n\!-\!1$
	\item$n\!-\!1: \mathbf{~if~} c_2>0 \mathbf{~then~} c_2 \leftarrow c_2-1; \mathbf{goto~} n\!-\!1$ $\mathbf{else\ goto~} n$
	\item$n:\mathbf{halt}$
\end{itemize}
For $1<i<n-2$, the labels occurring in instruction $i$ belong to $\{0,\ldots,n-2\}$.
In a normalized two-counter program $\mathcal P$, given any initial values $v_1,v_2$,
 $\mathcal P$ halts with $v_1,v_2$ if and only if  $\mathcal P$ halts with initial values $0,0$.
 Moreover when $\mathcal P$ halts, the values of the counters are null.
The halting problem for normalized two-counter programs is also undecidable (see Lemme \ref{und-norm} in Appendix).

We now show that decisiveness is undecidable even for \emph{static} pCM, by considering only 
\emph{static} weights: for all $t\in \Delta$, $W(t)$ is a constant function. 


\begin{restatable}{theorem}{undecPCM}
\label{theorem:decisiveness-undecidable-static-pCM}
Decisiveness w.r.t. a finite set is undecidable in (static) pCM.
\end{restatable}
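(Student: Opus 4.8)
The plan is to reduce the halting problem for normalized two-counter programs to the decisiveness problem for static pCM. Since the halting problem for normalized two-counter programs is undecidable (as recalled before the statement), exhibiting a computable construction that maps a normalized program $\mathcal P$ to a static pCM $\mathcal C_{\mathcal P}$ together with an initial configuration $s_0$ and a \emph{finite} target set $A$, so that $\mathcal M_{\mathcal C_{\mathcal P}}$ is decisive w.r.t.\ $s_0$ and $A$ if and only if $\mathcal P$ does \emph{not} halt (or vice versa), will yield undecidability.

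First I would let the control states of $\mathcal C_{\mathcal P}$ be the instruction labels $\{0,\ldots,n\}$ of $\mathcal P$ and use two counters $c_1,c_2$ to faithfully simulate the program. Increment instructions of type (1) are encoded by a single $\Delta_1$-transition with the appropriate $\mathbf{Post}$ vector; decrement-or-branch instructions of type (2) are encoded by one $\Delta_1$-transition that consumes a token when $c_j>0$ and one $\Delta_0$-transition (the zero-test) that fires exactly when $c_j=0$. Because the program is deterministic, in every configuration at most the transitions corresponding to the current instruction are enabled; by choosing static weights for all transitions (say weight $1$ everywhere), the induced Markov chain is essentially deterministic along the simulated run, so the encoding is exact. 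The key structural feature I want is that the $\mathbf{halt}$ state $n$ should be reachable if and only if $\mathcal P$ halts, and — crucially — once the simulation leaves its intended path there should be no way back to a configuration from which the target is reachable.

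The target set $A$ and the initial configuration are chosen to turn ``$\mathcal P$ halts'' into a statement about decisiveness via Proposition~\ref{rec-dec-rec}. Recall decisiveness w.r.t.\ $s_0$ and $A$ fails exactly when with positive probability the path stays forever in $\overline{A}\cap{\bf EF}\,A$, i.e.\ forever in states from which $A$ is still reachable but never actually reaching $A$. The idea is to make the simulating run itself the unique source of nondecisiveness: I would add, from the initial configuration $s_0=(0,\mathbf{0})$, a branching gadget with static weights that with positive probability enters the faithful simulation of $\mathcal P$ and with the complementary probability enters an absorbing ``escape'' region from which $A$ is unreachable. Setting $A=\{(n,\mathbf{0})\}$ (the normalized halting configuration, which is finite since normalization forces the counters to be null at halt), the simulation branch keeps the path inside $\overline{A}\cap{\bf EF}\,A$ for as long as it runs. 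Thus if $\mathcal P$ does not halt, the simulation runs forever inside $\overline{A}\cap{\bf EF}\,A$ with positive probability, so $\mathcal M_{\mathcal C_{\mathcal P}}$ is \emph{not} decisive; if $\mathcal P$ halts, the simulation branch reaches $A$ almost surely on that branch and the escape branch lands in $\overline{Pre^*(A)}$, so almost surely the path reaches $A\cup\overline{Pre^*_{\mathcal M}(A)}$, giving decisiveness.

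The main obstacle, and the point needing the most care, is guaranteeing that there is \emph{no spurious way} to stay in ${\bf EF}\,A$ forever other than the non-halting run — every configuration of the simulation must be able to reach $A$ (so that the entire simulated run lies in ${\bf EF}\,A$, forcing nondecisiveness when $\mathcal P$ diverges), yet any deviation must fall into $\overline{Pre^*(A)}$. This is exactly what normalization buys us: because $\mathcal P$ resets both counters before halting and the halting configuration is $(n,\mathbf{0})$, every reachable configuration along a correct run can in principle continue to the unique normalized halt, keeping it in ${\bf EF}\,A$; and because zero-tests in a pCM fire only when the tested counter is genuinely $0$, the simulation cannot cheat, so no extra configurations leak into $Pre^*(A)$. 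I would verify that $A$ is finite and that reachability of $A$ is decidable (or simply invoke Proposition~\ref{rec-dec-rec} to phrase everything in terms of recurrence of $\mathcal M_{s_0,A}$), and then check the two directions of the reduction carefully, the delicate one being that divergence of $\mathcal P$ indeed produces a positive-probability path trapped in $\overline{A}\cap{\bf EF}\,A$.
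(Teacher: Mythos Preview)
Your reduction has a genuine gap that makes the construction fail in both directions. The core problem is your claim that, when $\mathcal P$ does not halt, the deterministic simulation stays forever inside $\overline{A}\cap{\bf EF}\,A$. This is false: since your simulation is deterministic (weight~$1$ on the unique enabled transition at each step) and $\mathcal P$ is deterministic, the set of configurations reachable from any point of the simulated run is exactly the tail of that run. If $\mathcal P$ does not halt, the halt configuration $(n,\mathbf 0)$ is \emph{not} reachable from any of these configurations, so the entire simulated run lies in $\overline{Pre^*_{\mathcal M}(A)}$, not in ${\bf EF}\,A$. Hence your pCM is decisive whether or not $\mathcal P$ halts, and the reduction collapses. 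Your appeal to normalization does not help here: normalization guarantees that \emph{if} the program reaches $\mathbf{halt}$ then the counters are zero, and that the behaviour is independent of the initial counter values; it says nothing about $\mathbf{halt}$ being reachable from an intermediate configuration of a non-terminating run.

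The missing idea, which the paper supplies, is to introduce at every non-halt instruction a competing transition (with a larger static weight, here~$4$ against~$1$) that increments $c_1$ and jumps back to instruction~$0$. This single gadget does two things simultaneously. First, because a normalized program halts from \emph{any} initial counter values, the restart ensures that when $\mathcal P$ halts, $A$ is reachable from \emph{every} reachable configuration of $\mathcal C$, so the whole state space lies in ${\bf EF}\,A$. Second, the biased choice ($4$ versus at most~$2$) makes the value of $c_1$ perform a random walk with drift to the right; with positive probability $c_1$ never returns to~$0$, and since normalization forces $c_1=0$ at $\mathbf{halt}$, the run never reaches $A$. Thus $\mathcal C$ is \emph{not} decisive precisely when $\mathcal P$ halts. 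Note that the direction of the equivalence is the opposite of yours, and that non-decisiveness is obtained not by a one-shot coin at $s_0$ but by a persistent probabilistic perturbation that keeps $A$ reachable while making it avoidable with positive probability.
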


\section{Probabilistic safe one-counter machines}
\label{sec:oneCM}

We now study decisiveness for pCMs that only have  one counter denoted $c$. 
We also restrict $\Delta_1$:  a single counter PCM is \emph{safe} if for all $t\in \Delta_1$,
$(\mathbf{Pre}(t), \mathbf{Post}(t))\in \{1\}\times \{0,1,2\}$. In words, in a safe one-counter pCM,
a transition of $\Delta_1$ requires the counter to be positive and may either let it unchanged, 
or incremented or decremented by a unit. 

\subsection{One-state and one-counter pCM}

We first prove that decisiveness is undecidable for the probabilistic version of one-state and one-counter machines. 
Then we show how to restrict the weight functions and $\Delta_1$ such that this property becomes decidable.
Both proofs make use of the relationship between decisiveness and recurrence stated in Proposition~\ref{rec-dec-rec}, in an implicit way.
 

\begin{theorem}\label{dec-div-pPDA-undec}
The decisiveness problem for  safe one-counter pCM  is undecidable
even with a single state.
\end{theorem}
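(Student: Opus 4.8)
The plan is to reduce the halting problem for two-counter machines, which is undecidable~\cite{Minsky67}, to the \emph{complement} of decisiveness. The crucial observation is that in a safe one-counter pCM the weight function $W$ may be an \emph{arbitrary} computable function of the current counter value. Hence, rather than trying to simulate a two-counter machine step by step inside a single counter, I will hide the entire halting computation inside the weights and exploit the sharp recurrence/transience dichotomy of the birth--death walk $\mathcal M_1$ of Figure~\ref{fig:rw}, together with Proposition~\ref{rec-dec-rec}.

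Fix a two-counter machine $M$ with a designated initial configuration, and let $h\colon\nat\to\{0,1\}$ be the predicate with $h(n)=1$ iff $M$ halts within $n$ steps; $h$ is computable since one only runs $M$ for $n$ steps. I build a one-state, one-counter pCM $\mathcal C$ over the single counter $c$ with three transitions: a test-zero transition $t_0\in\Delta_0$ with $\mathbf{Post}(t_0)=1$ (so firing from $c=0$ leads to $c=1$), an incrementing transition $t_+\in\Delta_1$ with $(\mathbf{Pre}(t_+),\mathbf{Post}(t_+))=(1,2)$, and a decrementing transition $t_-\in\Delta_1$ with $(\mathbf{Pre}(t_-),\mathbf{Post}(t_-))=(1,0)$. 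Both $t_+$ and $t_-$ satisfy $(\mathbf{Pre},\mathbf{Post})\in\{1\}\times\{0,1,2\}$, so $\mathcal C$ is safe. I set $W(t_+,n)=W(t_-,n)=2$ when $h(n)=0$, and $W(t_+,n)=2$, $W(t_-,n)=1$ when $h(n)=1$; these values are positive and computable. By construction $\mathcal M_\mathcal C$ is exactly the walk $\mathcal M_1$: from $c=0$ one moves to $c=1$ with probability $1$, and from $c=n\geq 1$ one increments with probability $p_n=\frac{W(t_+,n)}{W(t_+,n)+W(t_-,n)}$ and decrements with the complementary probability, so that $\rho_n=\frac{1-p_n}{p_n}=\frac{W(t_-,n)}{W(t_+,n)}$ equals $1$ if $h(n)=0$ and $\tfrac12$ if $h(n)=1$.

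Now take $s_0=1$ and the finite target $A=\{0\}$. Every state reaches $0$, so $\overline{Pre_{\mathcal M_\mathcal C}^*(A)}=\emptyset$, the hypotheses of Proposition~\ref{rec-dec-rec} are met, and $\mathcal M_{s_0,A}$ is isomorphic to $\mathcal M_1$ (identify $0$ with the added state $s_\bot$); hence $\mathcal C$ is decisive w.r.t.\ $s_0$ and $A$ iff $\mathcal M_1$ is recurrent, i.e.\ iff $\sum_{n\in\nat}\prod_{1\le m<n}\rho_m=\infty$ by the recurrence criterion for $\mathcal M_1$. Let $T\in\nat\cup\{\infty\}$ be the halting time of $M$, so $h(n)=0$ for $n<T$ and $h(n)=1$ for $n\geq T$. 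If $T=\infty$, every factor $\rho_m=1$, each term of the sum equals $1$, and the sum diverges, so $\mathcal C$ is decisive. If $T<\infty$, then $\prod_{1\le m<n}\rho_m=1$ for $n\le T$ and $\prod_{1\le m<n}\rho_m=2^{-(n-T)}$ for $n>T$, so the sum is finite (it equals $T+1$) and $\mathcal C$ is not decisive. Thus $\mathcal C$ is decisive w.r.t.\ $s_0$ and $A$ iff $M$ does not halt, and a decision procedure for decisiveness would decide the halting problem for $M$, a contradiction.

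The only genuine subtlety is conceptual: a single counter cannot faithfully track the two counters of $M$, so the reduction instead pushes all the undecidability into the computable weight function and relies on the recurrence/transience behaviour of birth--death chains rather than on a step-by-step simulation. Once this is seen, verifying that $\mathcal C$ stays inside the safe, one-state, one-counter class and that divergence of $\sum_n\prod_{m<n}\rho_m$ matches non-halting is routine. Note that the construction in fact shows that \emph{non}-decisiveness is exactly as hard as halting.
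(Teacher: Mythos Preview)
Your proof is correct and follows the same strategy as the paper: encode an undecidable problem in the (arbitrary computable) weights so that the resulting one-state chain is exactly the birth--death walk $\mathcal M_1$, and then invoke its recurrence criterion $\sum_n\prod_{m<n}\rho_m=\infty$ via Proposition~\ref{rec-dec-rec}. The only difference is the source problem---the paper reduces from Hilbert's tenth problem (taking $W(inc,n)=\min\{P^2(n_1,\dots,n_k)+1\mid n_1+\cdots+n_k\le n\}$, so the walk becomes symmetric from some point on iff $P$ has a non-negative integer root) rather than the halting problem, but the mechanism is identical.
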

\begin{proof}
We will reduce the Hilbert's tenth problem to decisiveness problems.
Let $P\in \integer[X_1,\ldots X_k]$ be an integer polynomial with $k$ variables. This problem
asks whether there exist $n_1,\ldots, n_k\in \nat$ such that $P(n_1,\ldots,n_k)=0$.

\smallskip\noindent
We define $\mathcal C$ as follows. There are two transitions both in $\Delta_1$:
\begin{itemize}[nosep]
	\item $dec$ with $\mathbf{Pre}(dec)=1$ and $\mathbf{Post}(dec)=0$;
	\item $inc$ with $\mathbf{Pre}(inc)=0$ and $\mathbf{Post}(inc)=1$.
\end{itemize}
The weight of  $dec$ is the constant function 1, i.e., $W(dec, n)=f(n)=1$, while 
the weight of $inc$ is defined by the following (non polynomial) function:
\begin{small}
$$W(inc,  n)=g(n)=\min(P^2(n_1,\ldots,n_k)+1 \mid n_1+\ldots +n_k\leq n)$$
\end{small}

\noindent
This function is obviously computable. Let us study the decisiveness  of 
$\mathcal M_{\mathcal C}$ w.r.t. $s_0=1$ and $A=\{0\}$. Observe that
$\mathcal M_{\mathcal C}$ is the Markov chain $\mathcal M_1$ of Figure~\ref{fig:rw}.
Let us recall that in $\mathcal M_1$, the probability to reach $0$ from $i$
is 1 iff $\sum_{n\in \nat} \prod_{1\leq m<n} \rho_m =\infty$ and otherwise it is equal
to $\frac{\sum_{i\leq n} \prod_{1\leq m<n} \rho_m}{\sum_{n\in \nat} \prod_{1\leq m<n} \rho_m}$
with $\rho_m=\frac{1-p_m}{p_m}$.

\noindent
$\bullet$ Assume there exist  $n_1,\ldots, n_k\!\in\! \nat$ s.t. $P(n_1,\ldots,n_k)\!=\!0$. 
Let $n_0\!=\!n_1+\cdots + n_k$.
Thus for all $n\!\geq\! n_0$, $W(inc, n)=1$, which implies that
$\suiv_{\mathcal C}(n,n-1)=\suiv_{\mathcal C}(n,n+1)=\frac 1 2$.
Thus due to the results on $\mathcal M_1$, from any state $n$, one reaches $0$
almost surely and so $\mathcal M_{\mathcal C}$ is decisive.

\noindent
$\bullet$
Assume there do not exist  $n_1,\ldots, n_k\in \nat$ s.t. $P(n_1,\ldots,n_k)=0$.
For all $n\in \nat$, $W(inc, n)\geq 2$, implying that in $\mathcal M_1$,
$\rho_n\leq \frac{1}{2}$. Thus  $\mathcal M_{\mathcal C}$ is not decisive.

\end{proof}

Due to the negative result for single state and single counter pCM stated in 
Theorem~\ref{dec-div-pPDA-undec}, it is clear that one must restrict the possible weight functions.

\begin{restatable}{theorem}{decdivpda}
\label{dec-div-pPDA}
The decisiveness problem w.r.t. $s_0$ and finite $A$ for polynomial safe one-counter pCM  $\mathcal C$
with a single state
is decidable in linear time. 
\end{restatable}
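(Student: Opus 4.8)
The plan is to reduce decisiveness to recurrence of an explicit birth--death chain on $\nat$ and then decide that recurrence by inspecting the coefficients of two polynomials. Since $\mathcal C$ has a single state, a configuration is just a counter value, so $\mathcal M_{\mathcal C}$ lives on $\nat$. Because each weight is a \emph{positive} polynomial, it is strictly positive for every $n\in\nat$; hence for every $n\ge 1$ the set of enabled transitions is exactly $\Delta_1$, and --- grouping the safe transitions of $\Delta_1$ according to whether $\mathbf{Post}(t)$ equals $2$, $1$ or $0$ --- the one-step probabilities at $n$ are $p_n=U(n)/W(n)$ (up), $r_n=R(n)/W(n)$ (stay) and $q_n=D(n)/W(n)$ (down), where $U,R,D$ are the computable positive polynomials obtained by summing the corresponding weight polynomials and $W=U+R+D$. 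At $n=0$ the behaviour is governed by $\Delta_0$ and is a finite boundary modification. First I would invoke Proposition~\ref{rec-dec-rec}: $\mathcal M_{\mathcal C}$ is decisive w.r.t.\ $s_0$ and $A$ iff $\mathcal M_{s_0,A}$ is recurrent. Removing the self-loops (the stay transitions), which affects neither recurrence nor reachability, leaves exactly the chain $\mathcal M_1$ of Figure~\ref{fig:rw} with up/down weights $f=U$, $g=D$, so that $\rho_m=D(m)/U(m)$.

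Next I would settle the reachability structure and the degenerate structural cases. Since the walk is nearest neighbour, the presence or absence of up- and down-transitions in $\Delta_1$, together with the increments allowed by $\Delta_0$ at $0$, determines $Pre_{\mathcal M_{\mathcal C}}^*(A)$; as $A$ is finite these are finitely many easily decidable alternatives (e.g.\ if $\Delta_1$ has no down-transition the counter never decreases for $n\ge 1$, so a finite $A$ below $s_0$ is unreachable and decisiveness holds vacuously; a symmetric remark handles $U\equiv 0$). In the generic case both $U$ and $D$ are non-zero and every state can reach $A$, so $\overline{Pre_{\mathcal M_{\mathcal C}}^*(A)}=\emptyset$ and, by Proposition~\ref{rec-dec-rec} together with the recurrence criterion recalled for $\mathcal M_1$, decisiveness is equivalent to
$$\sum_{n\in\nat}\ \prod_{1\le m<n}\rho_m=\infty, \qquad \rho_m=\frac{D(m)}{U(m)}.$$

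The core is then to decide this divergence from the polynomials $U,D$. Writing $\log\prod_{1\le m<n}\rho_m=\sum_{1\le m<n}\bigl(\log D(m)-\log U(m)\bigr)$ and expanding $\log(D(m)/U(m))$ asymptotically, the outcome depends only on the leading data of $U$ and $D$. If $\deg D>\deg U$, or $\deg D=\deg U$ with the leading coefficient of $D$ strictly larger, then $\rho_m$ is eventually bounded below by a constant $>1$, the products grow at least geometrically and the sum diverges (recurrent, hence decisive); the symmetric inequality gives convergence (transient, not decisive). The only delicate case is $\deg D=\deg U$ with equal leading coefficients: there $\log\rho_m=(a/c_k)\,m^{-1}+O(m^{-2})$, where $c_k>0$ is the common leading coefficient and $a$ is the coefficient of $m^{\,k-1}$ in $D-U$, so $\prod_{1\le m<n}\rho_m\sim C\,n^{\,a/c_k}$ and the sum diverges iff $a/c_k\ge -1$, i.e.\ iff $a\ge -c_k$; the further sub-case $\deg(D-U)\le k-2$ makes $\log\rho_m$ summable, the product converge to a positive constant, and the sum diverge (recurrent). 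Each test is a comparison of integers read off from the coefficients of $U$ and $D$, which are assembled in time linear in the size of $\mathcal C$, yielding the claimed linear-time bound.

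The main obstacle is exactly this critical case $\deg D=\deg U$ with equal leading coefficients: I must justify rigorously that the $O(m^{-2})$ error terms in the expansion of $\log(D(m)/U(m))$ are summable, so that $\prod_{1\le m<n}\rho_m$ really has the polynomial asymptotics $C\,n^{\,a/c_k}$ and the convergence of $\sum_n\prod_{m<n}\rho_m$ is governed solely by the threshold $a/c_k\ge -1$. Everything else --- the reduction through Proposition~\ref{rec-dec-rec}, the elimination of self-loops and boundary effects, and the enumeration of the degenerate structural cases --- is routine bookkeeping once this asymptotic estimate is in place.
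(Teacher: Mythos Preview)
Your proposal is correct and follows essentially the same approach as the paper: reduce to the recurrence criterion $\sum_{n}\prod_{m<n}\rho_m=\infty$ for the birth--death chain with $\rho_m=D(m)/U(m)$, then decide it by comparing the top coefficients of $D$ and $U$, the only delicate case being equal leading terms with the threshold at your $a/c_k=-1$ (the paper's $(a'_{d-1}-a_{d-1})/a_d=1$). The paper handles the critical case you flag not via a logarithmic expansion but via elementary inequalities such as $1-x\ge e^{-2x}$ and $1-x\le e^{-x}$ applied term by term, yielding the same case split and the same conclusion.
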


\subsection{Homogeneous one-counter machines}


Let $\mathcal C$ be a one-counter safe pCM. For all $q\in Q$, 
let $S_{q,1}=\sum_{t= (q,\mathbf{Pre}(t),\mathbf{Post}(t),q^+_t)\in \Delta_1} W(t)$ and
$\mathbf{M}_{\mathcal C}$ be the $Q\times Q$ matrix
defined by:\\  
\centerline{$\mathbf{M}_{\mathcal C}[q,q']=\frac{\sum_{t= (q,\mathbf{Pre}(t),\mathbf{Post}(t),q')\in \Delta_1} W(t)}{S_{q,1}}$}
(thus  $\mathbf{M}_{\mathcal C}[q,q']$ is a function from $\nat$ to $\rat_\geq 0$).

\begin{definition}[pHM]
A \emph{probabilistic homogeneous machine} (pHM) is a probabilistic safe one-counter machine $\mathcal C= (Q,\Delta,W)$ where:
\begin{itemize}
%
%
%
	\item For all $t\in \Delta$, $W(t)$ is a positive polynomial in $\nat[X]$;
	\item For all $q,q'\in Q$, $\mathbf{M}_{\mathcal C}[q,q']$ is constant.
\end{itemize}
\end{definition}

Observe that by definition, in a pHM,  $\mathbf{M}_{\mathcal C}$ is a transition matrix.


%


\begin{example}
Here $\mathbf{M}_{\mathcal C}[q,q']=\mathbf{M}_{\mathcal C}[q,q'']=\frac{X^2+X+1}{2(X^2+X+1)}=\frac 1 2$.
\end{example}

 \begin{center}
  \begin{tikzpicture}[node distance=2cm,->,auto,-latex]
    
   \path (0,0) node[minimum size=0.6cm,draw,circle,inner sep=2pt] (qp) {$q'$};
    \path (4,0) node[minimum size=0.6cm,draw,circle,inner sep=2pt] (q) {$q$};
   \path (8,0) node[minimum size=0.6cm,draw,circle,inner sep=2pt] (qs) {$q''$};
  
  \draw[arrows=-latex'] (q)-- (3.5,0.5) --(0.5,0.5)node[pos=0.5,above] {$X,1,2$}--(qp) ;
 \draw[arrows=-latex'] (q)-- (3.5,-0.5) --(0.5,-0.5)node[pos=0.5,below] {$X^2+1,1,0$}--(qp) ;

 \draw[arrows=-latex'] (q)-- (4.5,-0.5) --(7.5,-0.5) node[pos=0.5,below] {$X+1,1,0$}--(qs) ;
   \draw[arrows=-latex'] (q)-- (4.5,0.5) -- (7.5,0.5)node[pos=0.5,above] {$X^2,1,1$}--(qs) ;

  \end{tikzpicture}
\end{center}

The family $(r_q)_{q\in Q}$ of the next proposition is independent of the function $W$ and is associated with the qualitative behaviour of $\mathcal C$.
\begin{restatable}{proposition}{qualpHM}
Let $\mathcal C$ be a pHM. 
Then one can compute in polynomial time 
a family $(r_q)_{q\in Q}$ such that for all $q$,
$r_q\in \{0,\ldots,|Q|-1\} \cup \{\infty\}$, and $Q\times \{0\}$ 
is reachable from $(q,k)$
iff $k\leq r_q$.
\end{restatable}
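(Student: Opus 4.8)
The plan is to observe first that ``$Q\times\{0\}$ is reachable from $(q,k)$'' is purely a reachability question in the transition relation $\rightarrow$ of $\mathcal C$, and since every weight is positive ($W$ takes values in $\nat^*$), $s\rightarrow s'$ holds iff some enabled transition links $s$ to $s'$, independently of the actual values of $W$; this already explains why $(r_q)_{q\in Q}$ does not depend on $W$. I would then prove that for each fixed $q$ the set $R_q=\{k\in\nat : (q,k)\rightarrow^* Q\times\{0\}\}$ is downward closed, so that it equals $\{0,\dots,r_q\}$ for some $r_q$ or all of $\nat$ (set $r_q=\infty$). To see downward closure, take a witnessing run from $(q,k)$ (with $k\geq 1$) and truncate it at the first configuration with counter $0$; before that point the counter stays $\geq 1$, hence every fired transition lies in $\Delta_1$ and its enabledness only requires the counter to be positive. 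Writing $c_0=k,\dots,c_n=0$ for the counter values, let $j$ be the least index with $c_j=1$; firing the same prefix $t_1\cdots t_j$ from $(q,k-1)$ is legal (all intermediate counters are $c_i-1\geq 1$) and reaches counter $0$, so $k-1\in R_q$.

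The heart of the argument is the bound $r_q\in\{0,\dots,|Q|-1\}\cup\{\infty\}$, for which I would show: if $(q,k)\rightarrow^*Q\times\{0\}$ with $k\geq |Q|$ then $r_q=\infty$. Given such a run (truncated as above, so $c_i\geq 1$ for $i<n$ and $c_n=0$), for each level $\ell\in\{0,\dots,k\}$ let $i_\ell$ be the last time the counter equals $\ell$. A short monotonicity argument gives $i_k<i_{k-1}<\cdots<i_0=n$, and since $k\geq|Q|$ the $k+1$ control states $q_{i_0},\dots,q_{i_k}$ cannot be distinct; pick $a>b$ with $q_{i_a}=q_{i_b}=:q^\star$. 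The segment from time $i_a$ to time $i_b$ is then a closed $\Delta_1$-walk at $q^\star$ with net counter effect $b-a<0$ staying at counter $\geq 1$, i.e. a pumpable descending cycle. Combining the prefix $(q,k)\rightarrow^*(q^\star,a)$, suitably many shifted copies of this cycle, and the suffix $(q^\star,b)\rightarrow^*Q\times\{0\}$, I obtain, for every $h$ that is a multiple of $a-b$, a run $(q,k+h)\rightarrow^*Q\times\{0\}$; the residue is chosen exactly so that the pumped cycle lands back on $(q^\star,b)$ before the suffix is applied. Hence $R_q$ contains the unbounded progression $\{k+h(a-b):h\geq 0\}$, and being downward closed it equals $\nat$, so $r_q=\infty$.

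Finally, for the polynomial-time computation I would express $(r_q)$ as the least fixpoint of the monotone backward-reachability operator on $(\nat\cup\{\infty\})^Q$ given by $\Phi(\mathbf x)_q=\max\big(\{0\}\cup\{j\geq 1 : \exists\, q\xrightarrow{\delta}q'\in\Delta_1,\ x_{q'}\geq j+\delta\}\big)$, where $\delta\in\{-1,0,1\}$ is the counter effect of the transition. This fixpoint captures exactly co-reachability of $Q\times\{0\}$, the $\Delta_0$-transitions being irrelevant since they fire only at counter $0$, which already lies in the target. Starting Kleene iteration from $\mathbf 0$, every approximation stays below the true value, so as soon as a component reaches $|Q|$ we may, by the bound just proved, set it to $\infty$; the remaining finite components stabilize within $O(|Q|^2)$ rounds on the finite domain $\{0,\dots,|Q|-1\}$. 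Each round is polynomial, giving the claimed complexity.

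I expect the main obstacle to be the bound, specifically the last-visit decomposition together with the residue bookkeeping in the pumping step: one must guarantee that the descending cycle can be iterated while the counter stays positive and that the starting height can be chosen so that the cycle terminates exactly on the configuration from which the original suffix drains to $0$. The downward-closure and fixpoint parts are comparatively routine. Note that homogeneity of the pHM is not used here, consistently with $(r_q)$ being independent of $W$; it is needed only in the quantitative analysis built on top of this proposition.
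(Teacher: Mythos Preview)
Your proof is correct, and in places it is more carefully argued than the paper's.  The two proofs share the same skeleton (downward closure of $R_q$; existence of a pumpable negative cycle when $k\geq|Q|$; a polynomial backward-reachability computation), but differ in how they extract the negative cycle and in the algorithm.

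For the bound, the paper pigeonholes on control states along a \emph{shortest} path of length $\geq|Q|$ and asserts that the repeated control state yields a circuit with strictly negative counter effect; the implicit justification is that a zero circuit could be removed and a positive circuit could be shortcut (the shifted suffix reaches $0$ strictly earlier), both contradicting minimality.  You instead pigeonhole on control states at the \emph{last-visit times} $i_k<\cdots<i_0$ of counter levels $0,\dots,k$, which directly gives a segment from level $a$ to level $b<a$ at the same control state.  This avoids the shortest-path argument entirely and makes the ``counter stays $\geq 1$ during the cycle'' fact immediate, so the pumping bookkeeping (the $j$-th iteration starts at counter $a+h-(j-1)(a-b)\geq a$) goes through cleanly.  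Your use of downward closure to cover the residues modulo $a-b$ is the right finishing move.

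For the algorithm, the paper explicitly computes the layers $Reach_n$ for $n\leq(2|Q|-1)|Q|$ (each layer has size $O(|Q|\cdot n)$ since the counter in a length-$n$ path to $0$ is at most $n$).  Your Kleene iteration of the monotone operator $\Phi$ on the finite lattice $\{0,\dots,|Q|-1,\infty\}^Q$ computes the same object more symbolically; the cutoff ``$\geq|Q|\Rightarrow\infty$'' is justified exactly by the bound, and termination in $O(|Q|^2)$ rounds follows from monotonicity on a domain of height $O(|Q|)$.  Both are polynomial; your version has the minor advantage of not needing the $(2|Q|-1)|Q|$ depth estimate.

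Your closing remark that homogeneity (and more generally the weights) is unused here is correct and worth stating: the proposition is purely about the underlying transition system of a safe one-counter machine.
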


\begin{theorem}Let $\mathcal C$ be a pHM such that 
$\mathbf{M}_{\mathcal C}$ is irreducible. Then the decisiveness problem of $\mathcal C$
w.r.t. $s_0=(q,n)\in Q \times \nat$ 
and $A=Q\times \{0\}$ is decidable in polynomial time.
\end{theorem}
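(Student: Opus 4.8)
The plan is to reduce the decisiveness question to a recurrence question about a single, effectively computable one-dimensional random walk with position-dependent (but ultimately polynomial-ratio) probabilities, exploiting Proposition~\ref{rec-dec-rec} together with the homogeneity hypothesis. First I would use the preceding proposition to compute the family $(r_q)_{q\in Q}$, so that I know exactly from which configurations $(q,k)$ the target $A=Q\times\{0\}$ is reachable; configurations with $k>r_q$ lie in $\overline{Pre^*(A)}$ and are harmless for decisiveness. The interesting dynamics therefore take place on the strongly connected ``core'' where $A$ is reachable, and by Proposition~\ref{rec-dec-rec} decisiveness of $\mathcal C$ w.r.t.\ $s_0$ and $A$ is equivalent to recurrence of the induced irreducible chain $\mathcal M_{s_0,A}$.

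Next I would exploit the defining property of a pHM, namely that the control-state transition matrix $\mathbf{M}_{\mathcal C}[q,q']$ is \emph{constant} (independent of the counter value). Because $\mathbf{M}_{\mathcal C}$ is an irreducible stochastic matrix over the finite set $Q$, it has a unique stationary distribution $\pi$, computable in polynomial time. The idea is that the control-state component behaves, at least in the ``bulk'' where the counter is large, like a finite irreducible Markov chain decoupled from the counter, so that the long-run fraction of time spent in each state $q$ is given by $\pi(q)$. The counter increment per step, conditioned on being in state $q$, has an expectation that is a rational function of the counter value determined by the positive polynomials $W(t)$; averaging these drifts against $\pi$ yields an asymptotic mean drift $\delta$ of the counter. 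Concretely I would compute, for each state $q$, the expected one-step change of the counter as a ratio of polynomials in $X$, take its limit $\ell_q$ as $X\to\infty$ (the leading-coefficient ratio), and form $\delta=\sum_{q}\pi(q)\,\ell_q$.

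The core of the argument is then a drift (Lyapunov/Foster) dichotomy. If $\delta<0$, the counter has negative asymptotic drift, so the chain is positive recurrent and hence $\mathcal C$ is decisive. If $\delta>0$, the counter drifts to infinity and the chain is transient, so $\mathcal C$ is not decisive; here I would invoke the transience criterion of Theorem~\ref{theorem:sufficient-transient} (or its iterated version Theorem~\ref{theorem:sufficient-transientbis} applied to $\mathbf{M}_{\mathcal C}^{(d)}$ to handle the finite control), taking the Lyapunov function $f(q,k)=k$ after a bounded correction term absorbing the per-state fluctuations. The boundary case $\delta=0$ is the delicate one and will be the main obstacle: a null drift does not by itself decide recurrence, and one must examine the \emph{second-order} behaviour. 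Here I would compute the asymptotic variance and the next-order term of the drift (the ratio of subleading coefficients of the increment polynomials, again averaged against $\pi$), obtaining a criterion in the spirit of the classical birth--death recurrence condition $\sum_n\prod_{m<n}\rho_m=\infty$ already used for $\mathcal M_1$ in Figure~\ref{fig:rw}: the walk is recurrent precisely when the second-order drift is nonpositive in the appropriate normalized sense. All the quantities involved ($\pi$, the leading and subleading coefficient ratios, the variance) are computable from the finitely many polynomials $W(t)$, so the whole test runs in polynomial time.

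The main difficulty to manage carefully is the coupling between the finite control component and the unbounded counter, and in particular making the ``asymptotic mean drift'' argument rigorous rather than heuristic. The homogeneity assumption that $\mathbf{M}_{\mathcal C}$ is constant is exactly what decouples the two components in the limit, but for small counter values the probabilities genuinely depend on $k$, so one must show that only the behaviour for large $k$ governs recurrence. I would handle this by a lumping/aggregation argument: group the $|Q|$ states at each counter level into a single super-level and track only the counter, showing that the aggregated process is a quasi-birth--death process whose stationary analysis (matrix-geometric / spectral-radius criterion) reduces to the scalar drift $\delta$ and its refinement. The genuinely novel technical point, and the one I expect to require the most work, is establishing the precise recurrence criterion in the null-drift case $\delta=0$ and proving it is decidable in polynomial time; once that scalar condition is pinned down, the positive and negative drift cases follow routinely from Theorems~\ref{theorem:sufficient-transient}--\ref{theorem:sufficient-transientbis} and standard positive-recurrence (Foster) arguments.
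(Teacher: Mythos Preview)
Your high-level plan—compute the stationary distribution of $\mathbf{M}_{\mathcal C}$ and collapse the two-component chain to a one-dimensional walk on the counter—is the right one, and it is also the paper's. But your execution takes a harder road and leaves the crux (your $\delta=0$ case) essentially unresolved.

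The paper uses homogeneity much more sharply than an asymptotic-decoupling/drift argument. It first shows that irreducibility of $\mathbf{M}_{\mathcal C}$ forces a dichotomy you skip: either every $r_q<\infty$ (then decisiveness holds for all $(q,k)$ by an easy finite-attractor argument, no stationary distribution needed) or every $r_q=\infty$. In the latter case, after redefining the irrelevant transitions out of $Q\times\{0\}$ so that they too follow $\mathbf{M}_{\mathcal C}$, one starts the control state with the stationary law $\pi_\infty$. Because $\mathbf{M}_{\mathcal C}$ is constant in the counter, the control marginal stays equal to $\pi_\infty$ at \emph{every} step, so the counter process $(N_n)_n$ is \emph{exactly} a Markov chain in its own right, with explicit polynomial weights $P_{-1},P_0,P_1$ obtained by $\pi_\infty$-averaging the $W(t)$. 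Its decisiveness is then decided directly by Theorem~\ref{dec-div-pPDA}, and the answer is transferred back to $\mathcal C$ using $\pi_\infty(q)>0$ for all $q$ together with irreducibility of $\mathbf{M}_{\mathcal C}$. Thus the ``main obstacle'' you flag (the null-drift case and its second-order refinement) never appears as a separate case: it is already absorbed into Theorem~\ref{dec-div-pPDA}. Your plan, by contrast, treats the lumped counter only asymptotically, then re-derives that theorem via a Foster/Lamperti case split you do not actually carry out. Finally, the transience criteria you invoke (the theorems you reference as \texttt{sufficient-transient} and \texttt{sufficient-transientbis}) do not appear in the paper, so you cannot cite them; they are in any event unnecessary once the exact reduction to Theorem~\ref{dec-div-pPDA} is in place.
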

\begin{proof} 
With the notations of previous proposition,
assume that there exist $q$ with $r_q<\infty$ and $q'$ with $r_{q'}=\infty$. 
Since $\mathbf{M}_{\mathcal C}$ is irreducible, there is a sequence of transitions in $\Delta_1$
 $q_0\xrightarrow{1,v_1}q_1\cdots \xrightarrow{1,v_m} q_m$ with $q_0=q$
 and $q_m=q'$. Let $sv=\min(\sum_{i\leq j} (v_i-1)|j\leq m)$ and pick some $k>\max(r_q,-sv)$.
 Then there is a path in $\mathcal M_{\mathcal C}$ from $(q,k)$ to $(q',k+\sum_{i\leq m} (v_i-1))$, 
 which yields a contradiction since   $(q,k)$ cannot reach  $Q\times \{0\}$
 while $(q',k+\sum_{i\leq m} v_i)$ can reach it. Thus either (1) for all $q\in Q$,
 $r_q<\infty$ or (2) for all $q\in Q$, $r_q=\infty$. 

\smallskip\noindent
$\bullet$ First assume that for all $q \in Q$, 
$r_q<\infty$. Thus for all $k>r_q$, $(q,k)$ cannot reach  $Q\times \{0\}$
and thus  $\mathcal C$
is decisive w.r.t. $(q,k)$ and $Q\times \{0\}$. Now consider a configuration 
$(q,k)$ with $k\leq r_q$. By definition
there is a positive probability say $p_{(q,k)}$ to reach $Q\times \{0\}$ from $(q,k)$.
Let $p_{\min}=\min(p_{(q,k)}\mid q\in Q \wedge  k\leq r_q)$.
Then for all $(q,k)$ with $k\leq r_q$, there is a probability at least $p_{\min}$
to reach either  $Q\times \{0\}$ or $\{(q,k)\mid q\in Q \wedge  k> r_q\}$ by a path of length $\ell=\sum_{q\in Q} (r_q+1)$.
This implies that after $n\ell$ transitions the probability to reach either $Q\times \{0\}$ or $\{(q,k)\mid q\in Q \wedge  k> r_q\}$
is at least $1-(1-p_{\min})^n$.
Thus  $\mathcal C$
is decisive w.r.t. $(q,k)$ and $Q\times \{0\}$. Summarizing for all $(q,k)$, $\mathcal C$
is decisive w.r.t. $(q,k)$ and $Q\times \{0\}$.

\smallskip\noindent
$\bullet$ Now assume that for all $(q,k) \in Q\times \nat$, 
$Q\times \{0\}$ is reachable from $(q,k)$. Thus the decisiveness problem
boils down to the almost sure reachability of $Q\times \{0\}$.

\noindent
Since the target of decisiveness  is $Q\times \{0\}$, we can arbitrarily set up the outgoing 
transitions  of these states (i.e., $\Delta_0$) without changing the decisiveness problem. So we choose these  transitions 
and associated probabilities as follows.
For all $q,q'$ such that $\mathbf{M}_{\mathcal C}[q,q']>0$, there is a transition $t=q\xrightarrow{c,0}q'$
with $W(t)= \mathbf{M}_{\mathcal C}[q,q']$.

\noindent
Since $\mathbf{M}_{\mathcal C}$ is irreducible, there is a unique invariant distribution
$\pi_\infty$ (i.e., $\pi_\infty\mathbf{M}_{\mathcal C}=\pi_\infty$) 
fulfilling for all $q\in Q$, $\pi_\infty(q)>0$. 
 
\noindent
Let $(Q_n,N_n)_{n\in \nat}$ be the stochastic process defined by $\mathcal M_{\mathcal C}$
with $N_0=k$ for some $k$ and for all $q\in Q$, $\prob(Q_0=q)=\pi_\infty(q)$. Due to the invariance of $\pi_\infty$
and the choice of transitions for $Q\times \{0\}$, one gets by induction that  for all $n\in \nat$ :
\begin{itemize}
	\item $\prob(Q_n=q)=\pi_\infty(q)$;
	\item for all $k>0$, $\prob(N_{n+1}=k+v-1 | N_n=k)=$
	$\sum_{q\in Q}\pi_\infty(q)\frac{\sum_{t=(q,1,v,q')\in \Delta_1}W(t,k)}{S_{q,1}(k)}$\\
	$=\frac{\sum_{q\in Q}\pi_\infty(q)\prod_{q'\neq q}S_{q',1}(k)\sum_{t=(q,1,v,q')\in \Delta}W(t,k)}{\prod_{q'\in Q}S_{q',1}(k)}$;
	\item $\prob(N_{n+1}=0 | N_n=0)=1$.
\end{itemize}
 
\noindent
For $v\in \{-1,0,1\}$, let us define the polynomial $P_{v}$ by:
$$\sum_{q\in Q}\pi_\infty(q)\prod_{q'\neq q}S_{q',1}\sum_{t=(q,1,v+1,q')\in \Delta_1}W(t)$$
Due to the previous observations, the stochastic process $(N_n)_{n\in \nat}$
is the Markov chain defined below where the weights outgoing from a state
have to be normalized:
\begin{center}
  \begin{tikzpicture}[node distance=2cm,->,auto,-latex,scale=0.9]
    
   \path (0,0) node[minimum size=0.6cm,draw,circle,inner sep=2pt] (q0) {$0$};
   \path (2,0) node[minimum size=0.6cm,draw,circle,inner sep=2pt] (q1) {$1$};
   \path (4,0) node[minimum size=0.6cm,draw,circle,inner sep=2pt] (q2) {$2$};
   \path (6,0) node[minimum size=0.6cm,draw,circle,inner sep=2pt] (q3) {$3$};
   \path (8,0) node[] (q4) {$\cdots$};

      

    \draw[arrows=-latex'] (q1)-- (1,-0.5) node[pos=1,below] {{\footnotesize $P_{-1}(1)$}}--(q0) ;

    \draw[arrows=-latex'] (q1)-- (3,0.5) node[pos=1,above] {{\footnotesize $P_{1}(1)$}}--(q2) ;
    \draw[arrows=-latex'] (q2)-- (3,-0.5) node[pos=1,below] {{\footnotesize$P_{-1}(2)$}}--(q1) ;

    \draw[arrows=-latex'] (q2)-- (5,0.5) node[pos=1,above] {{\footnotesize$P_{1}(2)$}}--(q3) ;
    \draw[arrows=-latex'] (q3)-- (5,-0.5) node[pos=1,below] {{\footnotesize$P_{-1}(3)$}}--(q2) ;

    \draw[arrows=-latex'] (q3)-- (7,0.5) node[pos=1,above] {{\footnotesize$P_{1}(3)$}}--(q4) ;
    \draw[arrows=-latex'] (q4)-- (7,-0.5) node[pos=1,below] {{\footnotesize$P_{-1}(4)$}}--(q3) ;

   \draw[-latex'] (q0) .. controls +(145:30pt) and +(215:30pt) .. (q0) node[pos=.5,left] {$1$};
   \draw[-latex'] (q1) .. controls +(125:30pt) and +(55:30pt) .. (q1) node[pos=.5,above] {{\footnotesize $P_{0}(1)$}};
   \draw[-latex'] (q2) .. controls +(125:30pt) and +(55:30pt) .. (q2) node[pos=.5,above] {{\footnotesize $P_{0}(2)$}};
   \draw[-latex'] (q3) .. controls +(125:30pt) and +(55:30pt) .. (q3) node[pos=.5,above] {{\footnotesize $P_{0}(3)$}};

  \end{tikzpicture}
\end{center}
Using our hypothesis about reachability, $P_{-1}$ is a positive polynomial (while $P_1$ could be null)
and thus  the decisiveness of this Markov chain w.r.t. state $0$ is equivalent to 
the decisiveness of the Markov chain below:
\begin{center}
  \begin{tikzpicture}[node distance=2cm,->,auto,-latex,scale=0.9]
    
   \path (0,0) node[minimum size=0.6cm,draw,circle,inner sep=2pt] (q0) {$0$};
   \path (2,0) node[minimum size=0.6cm,draw,circle,inner sep=2pt] (q1) {$1$};
   \path (4,0) node[minimum size=0.6cm,draw,circle,inner sep=2pt] (q2) {$2$};
   \path (6,0) node[minimum size=0.6cm,draw,circle,inner sep=2pt] (q3) {$3$};
   \path (8,0) node[] (q4) {$\cdots$};

      

    \draw[arrows=-latex'] (q1)-- (1,-0.5) node[pos=1,below] {{\footnotesize $P_{-1}(1)$}}--(q0) ;

    \draw[arrows=-latex'] (q1)-- (3,0.5) node[pos=1,above] {{\footnotesize $P_{1}(1)$}}--(q2) ;
    \draw[arrows=-latex'] (q2)-- (3,-0.5) node[pos=1,below] {{\footnotesize$P_{-1}(2)$}}--(q1) ;

    \draw[arrows=-latex'] (q2)-- (5,0.5) node[pos=1,above] {{\footnotesize$P_{1}(2)$}}--(q3) ;
    \draw[arrows=-latex'] (q3)-- (5,-0.5) node[pos=1,below] {{\footnotesize$P_{-1}(3)$}}--(q2) ;

    \draw[arrows=-latex'] (q3)-- (7,0.5) node[pos=1,above] {{\footnotesize$P_{1}(3)$}}--(q4) ;
    \draw[arrows=-latex'] (q4)-- (7,-0.5) node[pos=1,below] {{\footnotesize$P_{-1}(4)$}}--(q3) ;

   \draw[-latex'] (q0) .. controls +(145:30pt) and +(215:30pt) .. (q0) node[pos=.5,left] {$1$};

  \end{tikzpicture}
\end{center}

\noindent
Due to Theorem~\ref{dec-div-pPDA}, this problem is decidable (in linear time)
and either (1)  for all $k \in \nat$ this Markov chain is decisive w.r.t $k$ and $0$
or (2) for all $k>0$ this Markov chain is not decisive w.r.t $k$ and $0$.
Let us analyze the two cases w.r.t. the Markov chain of the pHM.

\noindent
{\bf Case (1)} In the stochastic process $(Q_n,N_n)_{n\in \nat}$,
the initial distribution has a positive probability for $(q,k)$ for all $q\in Q$.
This implies that  for all $q$, $\mathcal C$ is decisive w.r.t. $(q,k)$ and  $Q\times \{0\}$.
Since $k$ was arbitrary, this means that for all $(q,k)$, $\mathcal C$ is decisive w.r.t. $(q,k)$ and  $Q\times \{0\}$.

\noindent
{\bf Case (2)} Choosing $k=1$ and applying the same reasoning as for the previous case,
there is some $(q,1)$ which is not decisive (and so for all $(q,k')$ with $k'>0$). Let $q'\in Q$, since  $\mathbf{M}_{\mathcal C}$
is irreducible, there is a (shortest) sequence of transitions in $\Delta_1$ leading from $q'$ to $q$ whose length
is at most $|Q|-1$. Thus for all $(q',k')$ with $k'\geq |Q|$ there is a positive probability to reach some
$(q,k)$ with $k>0$. Thus $(q',k)$ is not decisive. 

\noindent
Now let $(q',k')$ with $k'<|Q|$. Then we compute by a breadth first exploration
the configurations reachable from $(q',k')$ until either (1) one reaches some $(q'',k'')$
with $k''\geq |Q|$ or (2) the full (finite) reachability set is computed. In the first case,
there is a positive probability to reach some $(q'',k'')$ with $k''\geq |Q|$ and from $(q'',k'')$ to some
$(q,k)$ with $k>0$ and so $(q',k')$ is not decisive. In the second case, 
it means that the reachable set is finite and from any configuration of this
set there is a positive probability to reach $Q\times \{0\}$ by a path of length at most the size of this
set. Thus almost surely $Q\times \{0\}$ will be reached and $(q',k')$ is decisive.

\end{proof}

\section{Probabilistic Petri nets}
\label{sec:pPN}

\smallskip
We now introduce probabilistic Petri nets as a subclass of pCM.

\begin{definition}[pPN] A \emph{probabilistic Petri net (pPN)} $\mathcal N$
is a pCM $\mathcal N = (Q,P,\Delta,W)$ where $Q$ is a singleton and $\Delta_0=\emptyset$. 
\end{definition}

\noindent
{\bf Notations.} Since there is a unique control state in a pPN, a configuration in a pPN is reduced to $\mathbf{m} \in \nat^P$ and it is called a \emph{marking}.
A pair $(\mathcal N,\mathbf{m}_0)$, where $\mathcal N$ is a pPN and $\mathbf{m}_0\in  \nat^P$
is some (initial) marking, is called a \emph{marked pPN}.

In previous works~\cite{AbdullaHM07,DBLP:conf/lics/BrazdilCK0VZ18} about pPNs, 
the weight function $W$ is a \emph{static} one: i.e., a function from $\Delta$ to $\nat^*$.
As above, we call these models \emph{static} probabilistic Petri nets.
\begin{figure}[!htb]
\begin{center}
\minipage{0.39\textwidth}
\begin{tikzpicture}[xscale=0.55,yscale=0.55]

\path (0,0) node[draw,circle,inner sep=2pt,minimum size=0.6cm,
label={[xshift=0cm, yshift=0cm]$p_i$}] (pi) {};

\path (0,-2) node[draw,circle,inner sep=2pt,minimum size=0.6cm,
label={[xshift=0cm, yshift=0cm]$c_j$}] (cj) {};

\path (0,-4) node[draw,circle,inner sep=2pt,minimum size=0.6cm,
label={[xshift=0cm, yshift=0cm]$p_{i'}$}] (piprime) {};

\path (2,-4) node[draw,circle,inner sep=2pt,minimum size=0.6cm,
label={[xshift=0.8cm, yshift=-0.6cm]$stop$}] (stop) {};

\path (-2,-2) node[draw,rectangle,inner sep=2pt,minimum width=0.4cm,minimum height=0.2cm,
label={[xshift=-0.8cm, yshift=-0.4cm]$inc_i$}] (inci) {};

\path (2,-2) node[draw,rectangle,inner sep=2pt,minimum width=0.4cm,minimum height=0.2cm,
label={[xshift=0.8cm, yshift=-0.4cm]$exit_i$}] (exiti) {};

\draw[arrows=-latex] (pi) -- (-2,0) --(inci) ;
\draw[arrows=-latex] (pi) -- (2,0) --(exiti) ;

\draw[arrows=-latex] (inci)--(cj);

\draw[arrows=-latex] (inci)--(-2,-4)--(piprime);
\draw[arrows=-latex] (exiti)  --(stop) ;

\end{tikzpicture}
\caption{$i: c_j \leftarrow c_j+1; \mathbf{ goto~} i'$}
\label{fig:increment}


\endminipage\hfill
\minipage{0.60\textwidth}

\begin{tikzpicture}[xscale=0.5,yscale=0.5]

\path (0,0) node[draw,circle,inner sep=2pt,minimum size=0.6cm,
label={[xshift=0cm, yshift=0cm]$p_n$}] (pn) {};

\path (0,-2) node[draw,circle,inner sep=2pt,minimum size=0.6cm,
label={[xshift=0cm, yshift=0cm]$sim$}] (sim) {};

\path (0,-4) node[draw,circle,inner sep=2pt,minimum size=0.6cm,
label={[xshift=0cm, yshift=-1.2cm]$p_0$}] (p0) {};

\path (4,-4) node[draw,circle,inner sep=2pt,minimum size=0.6cm,
label={[xshift=0cm, yshift=-1.2cm]$stop$}] (stop) {};

\path (2,0) node[draw,circle,inner sep=2pt,minimum size=0.6cm,
label={[xshift=0cm, yshift=0cm]$c_1$}] (c1) {};

\path (6,0) node[draw,circle,inner sep=2pt,minimum size=0.6cm,
label={[xshift=0cm, yshift=0cm]$c_2$}] (c2) {};

\path (-2,-2) node[draw,rectangle,inner sep=2pt,minimum width=0.4cm,minimum height=0.2cm,
label={[xshift=-0.8cm, yshift=-0.4cm]$again$}] (again) {};

\path (2,-2) node[draw,rectangle,inner sep=2pt,minimum width=0.4cm,minimum height=0.2cm,
label={[xshift=0.8cm, yshift=-0.4cm]$clean_1$}] (clean1) {};

\path (6,-2) node[draw,rectangle,inner sep=2pt,minimum width=0.4cm,minimum height=0.2cm,
label={[xshift=0.8cm, yshift=-0.4cm]$clean_2$}] (clean2) {};

\path (2,-4) node[draw,rectangle,inner sep=2pt,minimum width=0.4cm,minimum height=0.2cm,
label={[xshift=0cm, yshift=-1cm]$clean_3$}] (clean3) {};

\draw[arrows=-latex] (pn) -- (-2,0) --(again) ;
\draw[arrows=-latex] (again)--(sim);
\draw[arrows=-latex] (again)--(-2,-4)--(p0);

\draw[arrows=-latex] (clean1)  --(stop) ;
\draw[arrows=-latex] (stop)  --(clean1) ;
\draw[arrows=-latex] (c1)  --(clean1) ;

\draw[arrows=-latex] (clean2)  --(stop) ;
\draw[arrows=-latex] (stop)  --(clean2) ;
\draw[arrows=-latex] (c2)  --(clean2) ;

\draw[arrows=-latex] (clean3)  --(stop) ;
\draw[arrows=-latex] (stop)  --(clean3) ;
\draw[arrows=-latex] (sim)  --(clean3) ;

\end{tikzpicture}
\caption{halt instruction and cleaning stage}
\label{fig:halt}

\endminipage
\end{center}
\end{figure}

\begin{figure}
\begin{center}


\begin{tikzpicture}[xscale=0.6,yscale=0.6]

\path (0,0) node[draw,circle,inner sep=2pt,minimum size=0.6cm,
label={[xshift=0cm, yshift=0cm]$p_i$}] (pi) {};

\path (0,-2) node[draw,circle,inner sep=2pt,minimum size=0.6cm,
label={[xshift=0cm, yshift=0cm]$c_j$}] (cj) {};

\path (2,-2) node[draw,circle,inner sep=2pt,minimum size=0.6cm,
label={[xshift=0.6cm, yshift=-0.6cm]$q_i$}] (qi) {};

\path (-3,-4) node[draw,circle,inner sep=2pt,minimum size=0.6cm,
label={[xshift=0cm, yshift=-1.2cm]$p_{i'}$}] (piprime) {};

\path (2,-4) node[draw,circle,inner sep=2pt,minimum size=0.6cm,
label={[xshift=0cm, yshift=-1.2cm]$p_{i''}$}] (pisecond) {};

\path (5,-4) node[draw,circle,inner sep=2pt,minimum size=0.6cm,
label={[xshift=0cm, yshift=-1.2cm]$stop$}] (stop) {};

\path (0,-5.5) node[draw,circle,inner sep=2pt,minimum size=0.6cm,
label={[xshift=-0.8cm, yshift=-0.5cm]$\ \ sim$}] (sim) {};

\path (-3,-2) node[draw,rectangle,inner sep=2pt,minimum width=0.4cm,minimum height=0.2cm,
label={[xshift=-0.6cm, yshift=-0.4cm]$dec_i$}] (deci) {};

\path (2,-1) node[draw,rectangle,inner sep=2pt,minimum width=0.4cm,minimum height=0.2cm,
label={[xshift=0.8cm, yshift=-0.4cm]$begZ_i$}] (begZi) {};

\path (2,-3) node[draw,rectangle,inner sep=2pt,minimum width=0.4cm,minimum height=0.2cm,
label={[xshift=0.8cm, yshift=-0.4cm]$endZ_i$}] (endZi) {};

\path (5,-2) node[draw,rectangle,inner sep=2pt,minimum width=0.4cm,minimum height=0.2cm,
label={[xshift=0.8cm, yshift=-0.4cm]$exit_i$}] (exiti) {};

\path (0,-4) node[draw,rectangle,inner sep=2pt,minimum width=0.4cm,minimum height=0.2cm,
label={[xshift=-0.6cm, yshift=-0.4cm]$rm_i$}] (rmi) {};

\draw[arrows=-latex] (pi) -- (-3,0) --(deci) ;
\draw[arrows=-latex] (pi) -- (2,0) --(begZi) ;
\draw[arrows=-latex] (begZi) --(qi) ;
\draw[arrows=-latex] (qi) -- (endZi);
\draw[arrows=-latex] (cj)--(deci);
\draw[arrows=-latex] (endZi)--(pisecond);
\draw[arrows=-latex] (deci)--(piprime);

\draw[arrows=-latex] (sim) --(rmi) node[pos=0.5,right] {$2$}  ;
\draw[arrows=-latex] (exiti)  --(stop) ;

\draw[arrows=-latex] (pi)  --(5,0)--(exiti) ;
\draw[arrows=-latex] (qi)  --(rmi) ;
\draw[arrows=-latex] (rmi)  --(qi) ;
\draw[arrows=-latex] (cj)  --(rmi) ;
\draw[arrows=-latex] (rmi)  --(cj) ;

\end{tikzpicture}
\end{center}
\caption{\begin{small}$i:\mathbf{if~} c_j>0 \mathbf{~then~} c_j \leftarrow c_j-1; \mathbf{goto~} i' \mathbf{else\ goto~} i''$\end{small}}
\label{fig:decrement}

\end{figure}

Static-probabilistic VASS (and so pPNs) are \emph{decisive} with respect to \emph{upward closed sets} (Corollary 4.4 in \cite{AbdullaHM07}) but they may not be decisive w.r.t. an arbitrary finite set. 
Surprisingly, the decisiveness problem for Petri nets or VASS seems not to have been studied. We establish below that even for polynomial pPNs, decisiveness is undecidable.



\begin{theorem}\label{theorem:polynomial-pPN-decisiveness-undecidable}
The decisiveness problem of polynomial pPNs w.r.t. a finite or upward closed set is undecidable.
%
%
\end{theorem}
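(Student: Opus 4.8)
The plan is to reduce the halting problem of normalized two-counter programs, undecidable by Lemma~\ref{und-norm}, to the (non-)decisiveness of a polynomial pPN. Given such a program $\mathcal P$ with instructions $\{0,\dots,n\}$ and counters $c_1,c_2$, I would build a pPN $\mathcal N$ by assembling, for each instruction $i$, the gadgets of Figures~\ref{fig:increment} (increment) and~\ref{fig:decrement} (test-and-decrement), together with the halt/restart-and-cleaning gadget of Figure~\ref{fig:halt}. A place $p_i$ holds the control token encoding the program counter, places $c_1,c_2$ hold the counter values, and two auxiliary places $sim$ and $stop$ drive the correctness machinery. Increments and positive decrements ($inc_i$, $dec_i$) are simulated faithfully, but since a pPN has $\Delta_0=\emptyset$ it cannot test a counter for zero, so the else-branch of a test is only \emph{weakly} simulated: the transition $begZ_i$ may move the control token into $q_i$ even when $c_j>0$, which is an erroneous zero-test. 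The initial configuration is $s_0=p_0$ with empty counters, and the target is taken either as the finite set $A=\{stop\}$ (the clean marking carrying a single token in $stop$) or as the upward-closed set $\hat A=\{\mathbf m\mid \mathbf m(stop)\ge 1\}$.

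The first structural fact I would establish is that both target choices reduce decisiveness to the \emph{same} almost-sure property. From every reachable marking the control token can fire some $exit_i$, putting a token in $stop$; once in $stop$ the only enabled transitions are the cleaning transitions $clean_1,clean_2,clean_3$, which surely empty $c_1,c_2,sim$ and reach the clean marking. Hence $\hat A$, and also $A$ by the sure cleanup, is reachable from every reachable marking, so ${\bf EF}A$ holds everywhere and, by unfolding the definition, $\mathcal N$ is decisive w.r.t.\ $s_0$ and $A$ (resp.\ $\hat A$) iff the random run almost surely fires some $exit_i$, i.e.\ almost surely reaches $stop$. Negating, \emph{non-decisiveness is exactly the event that the simulation runs forever without ever exiting}, which has positive probability iff the exit probabilities accumulated along the run are summable. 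This is where Proposition~\ref{rec-dec-rec} is used implicitly, exactly as in Theorem~\ref{dec-div-pPDA-undec}: survival of the control token plays the role of transience.

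The polynomial weights are then designed so that this accumulated exit probability is governed by $sim$. I would give every $exit_i$ constant weight and all other transitions a weight growing polynomially in $sim$ of degree at least two, so that the one-step exit probability is of order $1/(sim+1)^2$; meanwhile $again$ increments $sim$ by one whenever the control token reaches $p_n$ and restarts at $p_0$, while an erroneous zero-test enables $rm_i$, which repeatedly consumes two tokens of $sim$ as long as $c_j>0$. Tuning the weights of $rm_i$ against $endZ_i$ so that the expected drain per error exceeds the single token that a (necessarily error-containing) spurious halt could add, $sim$ acquires a negative drift along every run that does not correspond to a genuine halt. Consequently, if $\mathcal P$ halts the faithful run drives $sim\to\infty$, makes the accumulated exit probability finite and yields a positive probability of never exiting, so $\mathcal N$ is \emph{not} decisive; if $\mathcal P$ does not halt, no genuine halt ever occurs, $sim$ stays almost surely bounded, the exit probability is bounded below at infinitely many steps, $stop$ is reached almost surely, and $\mathcal N$ \emph{is} decisive. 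This gives decisive $\iff$ $\mathcal P$ does not halt, hence undecidability for both targets.

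The main obstacle is the probabilistic analysis reconciling the weakness of the simulation with the exactness demanded by the reduction. Proving ``$\mathcal P$ does not halt $\Rightarrow$ decisive'' is delicate because erroneous zero-tests can spuriously drive the control token to $p_n$ and thereby increment $sim$; one must show that the $rm_i$-draining mechanism makes $sim$ a supermartingale up to a constant along all such runs, so that spurious increments cannot accumulate and $sim$ stays bounded almost surely. Symmetrically, for ``$\mathcal P$ halts $\Rightarrow$ not decisive'' one must certify that the faithful, never-exiting run keeps positive probability, i.e.\ that the product over the infinitely many steps of $\bigl(1-\Theta(1/(sim+1)^2)\bigr)$ converges; this is where the choice of polynomial degree at least two, together with balancing $dec_i$ against $begZ_i$ so that per-step error probabilities are also summable, is essential. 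Carrying out these two drift estimates, rather than the combinatorics of the gadgets, is the technical heart of the proof.
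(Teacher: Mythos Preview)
Your construction and reduction match the paper's almost exactly: the same gadgets from Figures~\ref{fig:increment}--\ref{fig:halt}, the same direction (decisive iff $\mathcal P$ does not halt), the same role of $sim$ as a restart counter drained by $rm_i$ on erroneous zero-tests, and the same Borel--Cantelli style product estimate for the halting case. The paper's concrete weights are $W(inc_i)=W(begZ_i)=\mathbf m(sim)^2{+}1$, $W(dec_i)=2\mathbf m(sim)^4{+}2$, $W(rm_i)=2$, and all others~$1$, which fits your informal description.

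The one place where your sketch should be sharpened is the non-halting analysis. You aim to show that $sim$ ``stays almost surely bounded'' via a supermartingale argument, but that target is both stronger than needed and not obviously true (a negatively-drifting recurrent walk on $\nat$ still visits arbitrarily large states). The paper instead partitions on $k$, the number of firings of $again$. For $k<\infty$ the tail has $sim\le k$, so the per-step exit probability is bounded below by a constant depending on $k$, forcing exit almost surely. For $k=\infty$, every simulation round must cheat (since $\mathcal P$ does not halt yet $p_n$ is reached), so at least once per round $q_i$ and $c_j$ are simultaneously marked; there $rm_i$ fires with probability $\tfrac{2}{3}$ and removes two tokens of $sim$, against the single increment contributed by $again$. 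Hence the successive post-$again$ values of $sim$ are stochastically dominated by the $(\tfrac{1}{3},\tfrac{2}{3})$ random walk, which returns to $1$ almost surely, so infinitely many rounds start with $sim\le 1$; at each such round the exit probability is at least $\tfrac{1}{7}$, and exit is almost sure. Aim for this recurrence-at-low-values statement rather than almost-sure boundedness.
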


\begin{proof}
We reduce the reachability problem of normalized two-counter machines to the  decisiveness problem of pPN.
Let $\mathcal C$ be a normalized two-counter machine with an instruction set $\{0,\ldots,n\}$. The  corresponding marked pPN $(\mathcal N_\mathcal C, \mathbf{m}_0)$ is built as follows.
Its set of places is $P=\{p_i \mid 0\leq i\leq n\} \cup \{q_i \mid i \mbox{~is a test instruction}\}   \cup \{c_j \mid 1\leq j\leq 2\} \cup \{sim,stop\}$.
The initial marking is $ \mathbf{m}_0 = p_0$.

\smallskip\noindent
The set $\Delta$ of transitions is defined by a pattern per type of instruction. The pattern for the incrementation  instruction
is depicted in Figure~\ref{fig:increment}.  The pattern for the test  instruction 
is depicted in Figure~\ref{fig:decrement}.  The pattern for the halt  instruction
is depicted in Figure~\ref{fig:halt} with in addition a \emph{cleaning stage}.

\smallskip\noindent
Before specifying the weight function $W$, let us describe the qualitative behaviour of this net.
$(\mathcal N_\mathcal C,\mathbf{m}_0)$ performs repeatedly a \emph{weak} simulation of  $\mathcal C$. As usual
 since the zero test does not exist in Petri nets, during a test instruction $i$, the simulation can follow the zero branch while the corresponding counter
is non null (transitions  $begZ_i$ and $endZ_i$). If the net has cheated then with transition $rm_i$, it can remove
tokens from $sim$ (two per two). In addition when the instruction is not $\mathbf{halt}$, instead of simulating it, it can \emph{exit}
the simulation by putting a token in $stop$ and then will remove tokens from the counter places including the simulation counter
as long as they are not empty.
The simulation of the $\mathbf{halt}$ instruction consists in restarting the simulation and incrementing the simulation counter
$sim$.

\smallskip\noindent
Thus the set of reachable markings is included in the following set of markings 
$\{p_i +xc_1+yc_2+z sim\mid 0\leq i\leq n, x,y,z \in \nat\} \cup \{q_i +xc_1+yc_2+z sim\mid  i \mbox{~is a test instruction}, x,y,z \in \nat\} \cup \{stop +xc_1+yc_2+z sim\mid x,y,z \in \nat\}$.
By construction, the marking $stop$ is always reachable. We will establish that $\mathcal N_\mathcal C$ is decisive w.r.t. $\mathbf{m}_0$ and $\{stop\}$ if and only if $\mathcal C$ does not halt.

\smallskip\noindent
Let us specify the weight function. For any incrementation instruction $i$, 
$W(inc_i,\mathbf{m})=\mathbf{m}(sim)^2+1$.
For any test instruction $i$, 
$W(begZ_i,\mathbf{m})=\mathbf{m}(sim)^2+1$,
$W(dec_i,\mathbf{m})=2\mathbf{m}(sim)^4+2$
and $W(rm_i,\mathbf{m})=2$.
All other weights are equal to 1.

\smallskip\noindent
$\bullet$ Assume that $\mathcal C$ halts and consider its execution $\sigma_\mathcal C$ with initial values $(0,0)$.
Let $\ell=|\sigma_\mathcal C|$ be the length of this execution. Consider now $\sigma$ the infinite sequence of $(\mathcal N_\mathcal C,\mathbf{m}_0)$ 
that infinitely performs the correct simulation of this execution. The infinite sequence $\sigma$ never marks the place $stop$.
We now show that the probability of $\sigma$ is non null implying that $\mathcal N_\mathcal C$ is not
decisive.

\smallskip\noindent
After every simulation of $\sigma_\mathcal C$, the marking of $sim$ is incremented and it is never decremented
since (due to the correctness of the simulation) every time a transition $begZ_i$ is fired, the corresponding counter place
$c_j$ is unmarked which forbids the firing of $rm_i$. So during the $(n+1)^{th}$ simulation of $\rho$,
the marking of $sim$ is equal to $n$.

\smallskip\noindent
So consider the  probability of the correct simulation of an instruction $i$  during the $(n+1)^{th}$ simulation.
\begin{itemize}
	\item If $i$ is an incrementation then the weight of $inc_i$ is $n^2$ and the weight of $exit_i$ is 1.
	So the probability of a correct simulation is 
	$\frac{n^2+1}{n^2+2}= 1-\frac{1}{n^2+2}\geq e^{-\frac{2}{n^2+2}}$.~\footnote{We use $1-x\geq e^{-2x}$ for $0\leq x\leq \frac 1 2$.}
	\item If $i$ is a test of $c_j$ and the marking of $c_j$ is non null 
	then the weight of $dec_i$ is $2n^4+2$, the weight of $begZ_i$ is $n^2+1$ and the weight of $exit_i$ is 1.
	So the probability of a correct simulation is 
	$\frac{2n^4+2}{2n^4+n^2+4}\geq \frac{2n^4+2}{2n^4+2n^2+4}=  \frac{n^2+1}{n^{2}+2}= 1-\frac{1}{n^2+2}\geq e^{-\frac{2}{n^2+2}}$.
	\item If $i$ is a test of $c_j$ and the marking of $c_j$ is null 
	then the weight of $begZ_i$ is $n^2+1$ and the weight of $exit_i$ is 1.
	So the probability of a correct simulation is 
	$\frac{n^2+1}{n^2+2}= 1-\frac{1}{n^2+2}\geq e^{-\frac{2}{n^2+2}}$.
\end{itemize}
So the probability of the correct simulation during the $(n+1)^{th}$ simulation is at least $(e^{-\frac{2}{n^2+2}})^\ell=e^{-\frac{2\ell}{n^2+2}}$.
Hence the probability of $\sigma$ is at least $\prod_{n\in \nat}e^{-\frac{2\ell}{n^2+2}}=e^{-\sum_{n\in \nat} \frac{2\ell}{n^2+2}}>0$, 
as the sum in the exponent converges.

\smallskip\noindent
$\bullet$ Assume that $\mathcal C$ does not halt (and so does not halt for any initial values of the counters).
We partition the set of infinite paths into a countable family of subsets and prove that for all of them the probability
to infinitely avoid to mark $stop$ is null which will imply that $\mathcal N_{\mathcal C}$ is decisive. 
The partition is based on $k \in \nat \cup \{\infty\}$, the number of firings of $again$ in the path.

\smallskip\noindent
{\bf Case $k<\infty$}. Let $\sigma$ be such a path and consider the suffix of $\sigma$
after the last firing of $again$. The marking of $sim$ is at most $k$ and can only decrease along the suffix.
Consider a simulation of an increment instruction $i$. The weight of $inc_i$ is at most 
 is $k^2+1$ and the weight of $exit_i$ is 1.
	So the probability of avoiding   $exit_i$ is at most
	$\frac{k^2+1}{k^2+2}= 1-\frac{1}{k^2+2}\leq e^{-\frac{1}{k^2+2}}$.
        Consider the simulation of a test instruction $i$.
	Then the weight of $dec_i$ is at most $2k^{4}+2$, the weight of $begZ_i$ is at most
	$k^2+1$ and the weight of $exit_i$ is 1.
	So the probability  of avoiding   $exit_i$ is at most
	$\frac{2k^{4}+k^2+2}{2k^{4}+k^2+4}\leq \frac{4k^{4}+1}{4k^{4}+2}=1-\frac{1}{4k^{4}+2}\leq e^{-\frac{1}{4k^{4}+2}}$.
	
\smallskip\noindent
Thus after $n$ simulations of instructions in the suffix, the probability to avoid to mark $stop$
is at most 	$e^{-\frac{n}{4k^{4}+2}}$. Letting $n$ go to infinity yields the result.

\smallskip\noindent
{\bf Case $k=\infty$}. We first show that almost surely there will be an infinite number of
simulations of $\mathcal C$ with the marking of $sim$ at most 1. Observe that all these simulations
are incorrect since they mark $p_n$ while $\mathcal C$ does not halt. So at least once per simulation
some place $q_i$ and the corresponding counter $c_j$ must be marked and if the marking of $sim$ is at least 2 with probability $\frac 2 3$ two tokens
of $sim$ are removed (recall that the weight of $rm_i$ is 2 and the weight of $endZ_i$ is 1). Thus once the marking of $sim$ is greater than 1,
considering the successive random markings of $sim$ after the firing of $again$ until it possibly reaches 1, this behaviour is \emph{stochastically bounded} by the following random
walk:

\vspace{-1mm}
\begin{center}
  \begin{tikzpicture}[node distance=2cm,->,auto,-latex]
    
   \path (2,0) node[minimum size=0.6cm,draw,circle,inner sep=2pt] (q1) {$1$};
   \path (4,0) node[minimum size=0.6cm,draw,circle,inner sep=2pt] (q2) {$2$};
   \path (6,0) node[minimum size=0.6cm,draw,circle,inner sep=2pt] (q3) {$3$};
   \path (8,0) node[] (q4) {$\cdots$};

   \draw[arrows=-latex'] (q2)-- (3,-0.5) node[pos=1,below] {$\frac 2 3$}--(q1) ;

    \draw[arrows=-latex'] (q2)-- (5,0.5) node[pos=1,above] {$\frac 1 3$}--(q3) ;
    \draw[arrows=-latex'] (q3)-- (5,-0.5) node[pos=1,below] {$\frac 2 3$}--(q2) ;

    \draw[arrows=-latex'] (q3)-- (7,0.5) node[pos=1,above] {$\frac 1 3$}--(q4) ;
    \draw[arrows=-latex'] (q4)-- (7,-0.5) node[pos=1,below] {$\frac 2 3$}--(q3) ;

  \end{tikzpicture}
\end{center}
\vspace{-2mm}

In this random walk, one reaches the state 1 with probability 1. This establishes
that almost surely there will be an infinite number of simulations of $\mathcal C$ with the marking of $sim$
at most 1. Such a simulation must simulate at least one instruction. If this instruction is an incrementation,
the exiting probability is at least $\frac {1}{3}$; if it is a test instruction the exiting probability is at least $\frac {1}{7}$.
Thus after $n$ such simulations of  $\mathcal C$, the probability to avoid to mark $stop$ is at most $(\frac {6} {7})^n$.
Letting $n$ go to infinity yields the result.

\noindent
Observe that the result remains true when substituting the singleton $\{stop\}$ by the set of markings
greater than or equal to $stop$.
\end{proof}

\vspace{2mm}
We deduce thus that decisiveness of extended (probabilistic) Petri nets is undecidable : 
in particular for  Reset Petri nets \cite{DBLP:conf/icalp/DufourdFS98}, Post-Self-Modifying Petri nets \cite{DBLP:conf/icalp/Valk78}, Recursive Petri nets, etc.

\begin{definition}
The \emph{language} of a marked Petri net $(\mathcal N, \mathbf{m}_0)$
is defined by $\mathcal L(\mathcal N,  \mathbf{m}_0)=\{\sigma\in \Delta^* \mid \mathbf{m}_0 \xrightarrow{\sigma} \}$.
$(\mathcal N \mathbf{m}_0)$ is \emph{regular} if $\mathcal L(\mathcal N,  \mathbf{m}_0)$ is regular.
\end{definition}

Given a marked Petri net $(\mathcal N, \mathbf{m}_0)$, the problem who asks whether it is regular is decidable \cite{GinzburgY80,DBLP:journals/jcss/ValkV81}
and belongs to {\sf EXPSPACE}~\cite{Demri13}. For establishing the next proposition, we only need the following result:
There exists a computable bound $B(\mathcal N,\mathbf{m}_0)$ such that for all markings $\mathbf{m}_1$ reachable from
$\mathbf{m}_0$ and all markings $\mathbf{m}_2$ with some
$p \in P$ fulfilling $\mathbf{m}_2(p)+B(\mathcal N,\mathbf{m}_0)<\mathbf{m}_1(p)$, $\mathbf{m}_2$ is unreachable from $\mathbf{m}_1$~(\cite{GinzburgY80}).

\begin{theorem}
\label{theorem:decidable:regularPN}
Let $(\mathcal N,\mathbf{m}_0)$ be a regular marked pPN and $\mathbf{m}_1$ be a marking. Then  $(\mathcal N,\mathbf{m}_0)$ is decisive with respect to 
$\mathbf{m}_0$ and $\{\mathbf{m}_1\}$.
\end{theorem}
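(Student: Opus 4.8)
The plan is to reduce decisiveness to recurrence via Proposition~\ref{rec-dec-rec} and then to exploit the bound $B(\mathcal N,\mathbf m_0)$ to show that the relevant part of the state space is \emph{finite}, whence recurrence is automatic. Write $R=Post_{\mathcal M}^*(\{\mathbf m_0\})$ for the set of markings reachable from $\mathbf m_0$ and $C=Pre_{\mathcal M}^*(\{\mathbf m_1\})$ for the set of markings from which $\mathbf m_1$ is reachable. First I would dispose of the degenerate cases: if $\mathbf m_0\in\{\mathbf m_1\}\cup\overline{C}$ (that is, either $\mathbf m_0=\mathbf m_1$, or $\mathbf m_1$ is unreachable from $\mathbf m_0$), then $\mathbf m_0\notin\overline{A}\cap\mathbf{EF}A$, so $\mathbf{G}(\overline{A}\cap\mathbf{EF}A)$ fails already at the initial state and decisiveness holds trivially. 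Hence assume $\mathbf m_0\notin\{\mathbf m_1\}\cup\overline{C}$, which is exactly the hypothesis required for $\mathcal M_{\mathbf m_0,\{\mathbf m_1\}}$ to be defined.

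The key step is the finiteness of $C\cap R$. Let $\mathbf m\in C\cap R$. Since $\mathbf m$ is reachable from $\mathbf m_0$ and $\mathbf m_1$ is reachable from $\mathbf m$, I apply the Ginzburg--Yanov bound with $\mathbf m$ playing the role of the reachable marking and $\mathbf m_1$ playing the role of the target: because $\mathbf m_1$ \emph{is} reachable from $\mathbf m$, no place $p$ can satisfy $\mathbf m_1(p)+B(\mathcal N,\mathbf m_0)<\mathbf m(p)$. Reading this contrapositively, $\mathbf m(p)\leq \mathbf m_1(p)+B(\mathcal N,\mathbf m_0)$ for every place $p\in P$. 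Thus every element of $C\cap R$ is bounded by $\mathbf m_1(p)+B(\mathcal N,\mathbf m_0)$ in each place $p$, and since $P$ is finite there are only finitely many such markings, so $C\cap R$ is finite.

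To conclude, I observe that the state space of $\mathcal M_{\mathbf m_0,\{\mathbf m_1\}}$, minus the added state $s_\bot$, consists of markings reachable from $\mathbf m_0$ lying in $\overline{A}\cap Pre_{\mathcal M}^*(A)=C\setminus\{\mathbf m_1\}$; hence it is contained in $C\cap R$ and is therefore finite. By Proposition~\ref{rec-dec-rec} the chain $\mathcal M_{\mathbf m_0,\{\mathbf m_1\}}$ is irreducible, and being a finite irreducible Markov chain it is recurrent. A final application of Proposition~\ref{rec-dec-rec} then yields decisiveness of $(\mathcal N,\mathbf m_0)$ w.r.t. $\mathbf m_0$ and $\{\mathbf m_1\}$.

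The main obstacle is invoking the bound correctly: it constrains the markings reachable \emph{from} a reachable marking, so one must read it contrapositively to control, for a state that can still reach $\mathbf m_1$, how far \emph{above} $\mathbf m_1$ that state may lie. Once $C\cap R$ is seen to be finite, everything else is routine. (If one prefers to avoid Proposition~\ref{rec-dec-rec}, the same finiteness gives a direct argument: any path realizing $\mathbf{G}(\overline{A}\cap\mathbf{EF}A)$ stays forever inside the finite set $C\cap R\setminus\{\mathbf m_1\}$; from each of its states a shortest firing path to $\mathbf m_1$ remains inside $C\cap R$ and has length at most $|C\cap R|$, which yields a uniform lower bound $p_{\min}>0$ on the probability of reaching $\mathbf m_1$ within $|C\cap R|$ steps, so the probability of staying forever is at most $\lim_n(1-p_{\min})^n=0$.)
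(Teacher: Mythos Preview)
Your proof is correct. Both arguments rest on the same core observation: the Ginzburg--Yanov bound $B$ forces any reachable marking that can still reach $\mathbf m_1$ to lie in the finite box $\{\mathbf m : \mathbf m(p)\le \mathbf m_1(p)+B \mbox{ for all }p\}$, so $C\cap R$ is finite.

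The difference is in packaging. The paper builds the finite graph explicitly by a forward exploration that stops at $\mathbf m_1$ and at any marking exceeding $\mathbf m_1(p)+B$ in some coordinate, then turns this graph into a finite Markov chain and reasons about its bottom SCCs: each BSCC is either $\{\mathbf m_1\}$, or a cut-off marking from which $\mathbf m_1$ is unreachable, or a genuine BSCC of $\mathcal M_{\mathcal N}$ not containing $\mathbf m_1$. You instead invoke Proposition~\ref{rec-dec-rec}: since $S_{\mathbf m_0,\{\mathbf m_1\}}\setminus\{s_\bot\}\subseteq C\cap R$ is finite, the irreducible chain $\mathcal M_{\mathbf m_0,\{\mathbf m_1\}}$ is finite, hence recurrent, hence $\mathcal M$ is decisive. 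Your alternative direct argument via a uniform lower bound $p_{\min}$ is essentially the BSCC reasoning spelled out by hand.

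What each buys: your route is shorter and reuses the paper's own recurrence characterization, so it is conceptually tighter. The paper's explicit construction pays a dividend noted in the Observation following the theorem: the finite chain it builds can be used to compute the reachability probability \emph{exactly}, not just to certify decisiveness.
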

\begin{proof}
Consider the following  algorithm that, after computing $B(\mathcal N,\mathbf{m}_0)$,  builds the following  finite graph:
\begin{itemize}
	\item Push on the stack $\mathbf{m}_0$.
	\item While the stack is not empty, pop from the stack some marking $\mathbf{m}$.
	Compute the set of transition firings $\mathbf{m}\xrightarrow{t}\mathbf{m}'$.
	Push on the stack $\mathbf{m}'$ if:
	\begin{enumerate}
		\item $\mathbf{m}'$ is not already present in the graph,
		\item and $\mathbf{m}'\neq \mathbf{m}_1$,
		\item and for all $p\in P$, $\mathbf{m}_1(p)+B(\mathcal N,\mathbf{m}_0)\geq \mathbf{m}'(p)$.
	\end{enumerate}
\end{itemize}
Due to the third condition, this algorithm terminates. From above, if $\mathbf{m}_1$
does not occur in the graph then $\mathbf{m}_1$ is unreachable from $\mathbf{m}_0$ and thus $\mathcal N$
is decisive w.r.t. $\mathbf{m}_1$.

\noindent
Otherwise, considering the weights specified by $W$ and adding loops for states without successors, this graph can be viewed as a finite Markov chain
and so reaching some bottom strongly connected component (BSCC) almost surely. There are three possible cases: (1) the BSCC consisting of 
$\mathbf{m}_1$, (2) a BSCC consisting of a single marking  $\mathbf{m}$ for which there exists some $p\in P$ fulfilling $\mathbf{m}_1(p)+B(\mathcal N,\mathbf{m}_0)< \mathbf{m}(p)$
and thus from which $\mathbf{m}_1$ is unreachable or (3) a BSCC that is also a BSCC of $\mathcal M_\mathcal N$ and thus from which one cannot reach
$\mathbf{m}_1$. This establishes that $\mathcal N$ is decisive w.r.t. $\mathbf{m}_1$.

\end{proof}

\noindent
{\bf Observation.} In this particular case, instead of using Algorithm~\ref{algo:prob-reach-dec} to frame the reachability probability, one can use the Markov chain of the proof
to exactly compute this probability.

\section{Conclusion and perspectives}
\label{sec:conclusion}

We have studied the decidability of decisiveness with respect to several subclasses of probabilistic counter machines.
The results are summarized in the following table.
When $A$ is not mentioned it means that $A$ is finite.

\medskip
\begin{center}
\noindent
\begin{scriptsize}
\begin{tabular}{|l|l|l|l|}
  \hline
  model & \!\!constant &\!\!polynomial &\!\!general\\
  \hline
pHM & D & D [Th \ref{dec-div-pPDA}] & U [Th \ref{dec-div-pPDA-undec}] \\
&  &  &  even with a single state\\
  \hline
  pPN & ? & U [Th \ref{theorem:polynomial-pPN-decisiveness-undecidable}] & U \\
& & also w.r.t. upward closed sets[Th \ref{theorem:polynomial-pPN-decisiveness-undecidable}] & but D when regular [Th \ref{theorem:decidable:regularPN}] \\
  \hline
  pCM & U [Th \ref{theorem:decisiveness-undecidable-static-pCM}] & U    & U    \\
  \hline
\end{tabular}
\end{scriptsize}
\end{center}
In the future, apart for solving the left open problem in the above table,
we plan to introduce sufficient conditions for decisiveness for models
with undecidability of decisiveness like pPNs with polynomial weights. This could have 
a practical impact for real case-study modellings. 
 
 In another direction, we have established that the decisiveness and recurrence properties are closely related.
 It would be interesting to define a property related to transience in Markov chains. In fact we have identified
 such a property called divergence and the definition and analysis of this property will appear in a forthcoming paper.


\newpage
\bibliography{ref}
\section*{Appendix}
\label{sec:apprendix-serge}

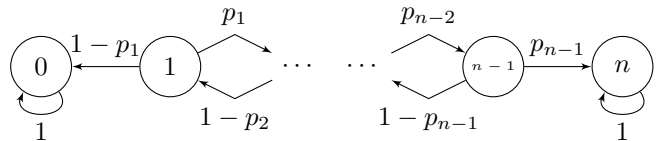
\begin{figure}[b]
\begin{center}
  \begin{tikzpicture}[node distance=2cm,->,auto,-latex,scale=0.85]
    
   \path (0,0) node[minimum size=0.8cm,draw,circle,inner sep=2pt] (q0) {$0$};
   \path (2,0) node[minimum size=0.8cm,draw,circle,inner sep=2pt] (q1) {$1$};
   \path (4,0) node[] (q2) {$\cdots$};
   \path (5,0) node[] (q3) {$\cdots$};
   \path (7,0) node[minimum size=0.8cm,draw,circle,inner sep=2pt] (q4) {\tiny{$n-1$}};
   \path (9,0) node[minimum size=0.8cm,draw,circle,inner sep=2pt] (q5) {$n$};

      

    \draw[arrows=-latex'] (q1) --(q0) node[pos=0.5,above] {$1-p_1$};

    \draw[arrows=-latex'] (q1)-- (3,0.5) node[pos=1,above] {$p_1$}--(q2) ;
    \draw[arrows=-latex'] (q2)-- (3,-0.5) node[pos=1,below] {$1-p_2$}--(q1) ;


    \draw[arrows=-latex'] (q3)-- (6,0.5) node[pos=1,above] {$p_{n-2}$}--(q4) ;
    \draw[arrows=-latex'] (q4)-- (6,-0.5) node[pos=1,below] {$1-p_{n-1}$}--(q3) ;

    \draw[arrows=-latex'] (q4) --(q5) node[pos=0.5,above] {$p_{n-1}$};

   \draw[-latex'] (q0) .. controls +(-55:30pt) and +(-125:30pt) .. (q0) node[pos=.5,below] {$1$};
   \draw[-latex'] (q5) .. controls +(-55:30pt) and +(-125:30pt) .. (q5) node[pos=.5,below] {$1$};

  \end{tikzpicture}
\end{center}
\caption{The (generalized) gambler's ruin}
\label{fig:gambler}
\end{figure}

\begin{proposition}
\label{prop:gambler}
 For all  $0\leq m\leq n$, let $preach_m$ be the probability to reach state $0$
from state $m$ in the Markov chain of Figure~\ref{fig:gambler}. Then:
$$preach_{m}=\frac{\sum_{m \leq j< n}\prod_{0<k\leq j}\rho_k}{\sum_{0\leq j< n}\prod_{0<k\leq j}\rho_k}$$
with for all $m$, $\rho_m=\frac{1-p_m}{p_m}$.
\end{proposition}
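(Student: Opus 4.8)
The plan is to run the standard first-step (harmonic) analysis for this finite absorbing chain. Writing $h_m = preach_m$, I would first observe that $0$ and $n$ are the only absorbing states, that every interior state $1\le m\le n-1$ can reach $\{0,n\}$, and that the chain is finite; hence absorption into $\{0,n\}$ occurs with probability one and the vector $(h_m)_{0\le m\le n}$ is well defined. By conditioning on the first step, $(h_m)$ is then the unique solution of the boundary value problem $h_0 = 1$, $h_n = 0$, and, for every interior state,
$$h_m = p_m\, h_{m+1} + (1-p_m)\, h_{m-1}.$$

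Next I would linearize this recurrence using increments $g_m = h_m - h_{m-1}$. Rearranging the interior equation as $p_m(h_{m+1}-h_m) = (1-p_m)(h_m-h_{m-1})$ gives the one-step relation $g_{m+1} = \rho_m\, g_m$ for $1\le m\le n-1$, with $\rho_m = \frac{1-p_m}{p_m}$. Iterating yields the closed form $g_{j+1} = g_1\prod_{0<k\le j}\rho_k$ for $0\le j\le n-1$, where the empty product covers the base case $j=0$.

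Finally I would telescope. Summing the increments from $m+1$ to $n$ and using $h_n = 0$ gives $h_m = -\sum_{j=m}^{n-1} g_{j+1} = -g_1\sum_{m\le j<n}\prod_{0<k\le j}\rho_k$. Specializing to $m=0$ and invoking $h_0 = 1$ determines $g_1 = -\bigl(\sum_{0\le j<n}\prod_{0<k\le j}\rho_k\bigr)^{-1}$; substituting this value back produces exactly the claimed expression for $preach_m$.

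The only genuinely non-routine point is the justification that $h_m$ is the \emph{unique} solution of the boundary value problem, and this is precisely where finiteness of the chain is used: an infinite version would in general admit several bounded solutions, but here almost-sure absorption into $\{0,n\}$ pins the solution down. Everything after that is the routine solution of a first-order linear recurrence by the method of increments, identical in spirit to the classical gambler's ruin computation.
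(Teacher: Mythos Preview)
Your proof is correct and follows essentially the same route as the paper's: first-step recurrence, rewriting via successive increments multiplied by $\rho_m$, telescoping, and fixing the free constant from a boundary condition. The only cosmetic differences are that the paper works with $preach_m-preach_{m+1}$ and telescopes from $0$ upward, whereas you use $g_m=h_m-h_{m-1}$ and telescope from $n$ downward; your explicit remark on uniqueness via almost-sure absorption is a point the paper leaves implicit.
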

\begin{proof}
$preach_0=1$, $preach_n=0$ and for $0<m<n$,
$$preach_m=p_m preach_{m+1}+(1-p_m) preach_{m-1}$$
 Thus for $0<m<n$,
 $$preach_m-preach_{m+1}=\rho_m(preach_{m-1}-preach_m)$$
 Thus  for $0<m<n$,
 $$preach_m-preach_{m+1}=(preach_{0}-preach_1)\prod_{0<k\leq m}\rho_k$$
 So for $0<m<n$,
 $$preach_0-preach_{m+1}=(preach_{0}-preach_1)\sum_{0\leq j\leq m}\prod_{0<k\leq j}\rho_k$$
 Let $m=n-1$. Then:
$$1=(1-preach_1)\sum_{0\leq j< n}\prod_{0<k\leq j}\rho_k$$
 implying 
 $$1-preach_1=\frac{1}{\sum_{0\leq j\leq n}\prod_{0<k\leq j}\rho_k}$$
 and consequently for
$0\leq m\leq n$,
$$1-preach_{m}=\frac{\sum_{0\leq j< m}\prod_{0<k\leq j}\rho_k}{\sum_{0\leq j< n}\prod_{0<k\leq j}\rho_k}$$
which can be rewritten for $0\leq m\leq n$,
$$preach_{m}=\frac{\sum_{m \leq j< n}\prod_{0<k\leq j}\rho_k}{\sum_{0\leq j< n}\prod_{0<k\leq j}\rho_k}$$

\end{proof}

\begin{proposition}\label{appendix} Let $\mathcal M_1$ be the Markov chain of Figure~\ref{fig:rw}. Then $\mathcal M_1$ is  recurrent if and only if:\\  
$$\sum_{n\in \nat} \prod_{1\leq m<n} \rho_m =\infty\mbox{ with }\rho_m=\frac{1-p_m}{p_m}$$ 
and when transient, the probability that starting from $m$ the random path visits
$0$ is equal to:
$$\frac{\sum_{m\leq n} \prod_{1\leq m<n} \rho_m}{\sum_{n\in \nat} \prod_{1\leq m<n} \rho_m}$$
\end{proposition}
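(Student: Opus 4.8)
The plan is to derive both assertions at once from the finite gambler's‑ruin computation of Proposition~\ref{prop:gambler} by pushing the right‑hand absorbing barrier to infinity. First I record that $\mathcal M_1$ is irreducible (as already observed for Figure~\ref{fig:rw}), so by the dichotomy that all states of an irreducible chain share the same nature it suffices to test recurrence at the single state $0$. Since from $0$ the chain moves to $1$ with probability $1$, state $0$ is recurrent iff $\prob_{\mathcal M_1,1}(\mathbf{F}\{0\})=1$; more generally I write $h_m=\prob_{\mathcal M_1,m}(\mathbf{F}\{0\})$ for the probability of ever reaching $0$ from $m$ (for $m\ge 1$ this coincides with \emph{visiting} $0$, since $s_0=m\neq 0$). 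Both the recurrence criterion and the transient formula will follow once $h_m$ is computed, so the whole proof reduces to evaluating $h_m$.

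Next I would truncate the state space. Fix $N>m$ and make $N$ absorbing; up to the first time the path leaves $\{1,\dots,N-1\}$, i.e.\ hits $0$ or $N$, its transitions are exactly those of the gambler's ruin of Figure~\ref{fig:gambler} (the reflecting behaviour at $0$ is irrelevant, as we stop at the first visit to $0$). Hence Proposition~\ref{prop:gambler} applies verbatim and gives the probability $h_m^{(N)}$ of hitting $0$ strictly before $N$ as $h_m^{(N)}=\frac{\sum_{m\le j<N}\prod_{0<k\le j}\rho_k}{\sum_{0\le j<N}\prod_{0<k\le j}\rho_k}$, with $\rho_k=\frac{1-p_k}{p_k}$.

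Then I pass to the limit $N\to\infty$. The events $\{\text{hit }0\text{ before hitting }N\}$ are nondecreasing in $N$, and any path that ever hits $0$ reaches only finitely many states beforehand, so their union is exactly $\{\mathbf{F}\{0\}\}$; by continuity of measure $h_m=\lim_{N\to\infty}h_m^{(N)}$. Writing $S_N=\sum_{0\le j<N}\prod_{0<k\le j}\rho_k$, one has $h_m^{(N)}=1-\frac{\sum_{0\le j<m}\prod_{0<k\le j}\rho_k}{S_N}$. If $\sum_{n\in\nat}\prod_{1\le m<n}\rho_m=\lim_N S_N=\infty$, then the subtracted fraction vanishes in the limit, so $h_m=1$ for every $m$; thus $0$ is reached almost surely, $0$ is recurrent, and $\mathcal M_1$ is recurrent. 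If the series converges, then already $h_1<1$, so $\mathcal M_1$ is transient, and letting $N\to\infty$ in the quotient for $h_m^{(N)}$ yields (after the evident reindexing $n=j+1$) the announced value $h_m=\frac{\sum_{m\le n}\prod_{1\le m<n}\rho_m}{\sum_{n\in\nat}\prod_{1\le m<n}\rho_m}$.

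The only genuinely delicate point is the limit interchange in the third step: one must argue that the finite‑barrier hitting probabilities converge to the true reaching probability, which is precisely the increasing‑union / continuity‑of‑measure argument above, together with the elementary reduction of recurrence of the whole chain to the single probability $h_1$. Everything else is the algebraic simplification already performed inside Proposition~\ref{prop:gambler}, so no further computation is required.
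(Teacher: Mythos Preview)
Your proposal is correct and follows essentially the same route as the paper: truncate at a right barrier $N$, invoke Proposition~\ref{prop:gambler} for the finite gambler's ruin to obtain the probability of hitting $0$ before $N$, and let $N\to\infty$ via monotone continuity of the events $\{\text{hit }0\text{ before }N\}$. The paper is slightly terser (it simply states the equivalence with recurrence without spelling out the reduction to $h_1=1$), but the structure and the key steps are identical.
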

\begin{proof}
Let $0<m$ be some state of $\mathcal M_1$ and $n>m$. Consider the set of paths $\Omega_n$ that, starting from $m$ visit state 0 without having previously visiting state $n$
and $\Omega_\infty$ the set of paths that, starting from $m$ visit state 0. Obviously $(\Omega_n)_{n>m}$ is a non decreasing sequence of sets and $\bigcup_{n>m} \Omega_n=\Omega _\infty$.
Thus $\prob_{\mathcal M_1,m}(\Omega_\infty)=\lim_{n\rightarrow \infty}\prob_{\mathcal M_1,m}(\Omega_n)$.

\smallskip\noindent
Observe now that $\prob_{\mathcal M_1,m}(\Omega_n)$ is the probability in the Markov chain of Figure~\ref{fig:gambler} that,
starting from $m$ the random path reaches $0$, i.e., $\frac{\sum_{m \leq j< n}\prod_{0<k\leq j}\rho_k}{\sum_{0\leq j< n}\prod_{0<k\leq j}\rho_k}$.
Thus $\prob_{\mathcal M_1,m}(\Omega_\infty)=\lim_{n\rightarrow \infty} \frac{\sum_{m \leq j< n}\prod_{0<k\leq j}\rho_k}{\sum_{0\leq j< n}\prod_{0<k\leq j}\rho_k}$.

\smallskip\noindent
This probability is equal to 1 if and only if $\sum_{n\in \nat} \prod_{1\leq m<n} \rho_m =\infty$ which is equivalent to the recurrence of $\mathcal M_1$.\\
If $\mathcal M_1$ is transient then $\prob_{\mathcal M_1,m}(\Omega_\infty)= \frac{\sum_{m \leq j}\prod_{0<k\leq j}\rho_k}{\sum_{ j\in \nat}\prod_{0<k\leq j}\rho_k}$
which concludes the proof.

\end{proof}

\begin{lemma}\label{und-norm}
The halting problem of normalized counter machines is undecidable.
\end{lemma}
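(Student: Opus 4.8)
The plan is to reduce the (undecidable) halting problem for arbitrary two-counter machines to the halting problem for normalized ones by an explicit, computable syntactic transformation, so that undecidability transfers directly. Let $\mathcal M$ be an arbitrary two-counter machine with instructions $\{0,\dots,m\}$ (instruction $m$ being $\mathbf{halt}$), together with initial values $v_1,v_2$; by~\cite{Minsky67} deciding whether $\mathcal M$ halts from $(v_1,v_2)$ is undecidable. First I would remove the dependence on the initial values: prepending $v_1$ instructions $c_1\leftarrow c_1+1$ and $v_2$ instructions $c_2\leftarrow c_2+1$ in front of instruction $0$ yields a machine $\mathcal M'$ that, started from $(0,0)$, halts iff $\mathcal M$ started from $(v_1,v_2)$ halts. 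It then suffices to build, from $\mathcal M'$, a \emph{normalized} machine $\mathcal C'$ that halts from $(0,0)$ iff $\mathcal M'$ does.

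The construction of $\mathcal C'$ follows the normalized template of the excerpt. I would relabel each instruction $i$ of $\mathcal M'$ as instruction $i+2$ of $\mathcal C'$, shifting every goto-target $j<m$ to $j+2$ and re-routing every target $j=m$ (i.e.\ every jump into $\mathbf{halt}$) to the label $n-2$ below. Then I prepend the two reset instructions $0$ and $1$ and append the cleaning stage $n-2,n-1$ and the halt $n$ exactly as prescribed. Since $\mathcal M'$ has $m$ non-halt instructions, the body occupies positions $2,\dots,n-3$, which fixes $n=m+4$; all re-routed targets then lie in $\{2,\dots,n-3\}\cup\{n-2\}=\{2,\dots,n-2\}\subseteq\{0,\dots,n-2\}$, so the syntactic constraint on labels occurring in instructions $1<i<n-2$ is met.

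Correctness splits into three observations, none requiring any real computation. First, started from $(0,0)$ the reset instructions $0$ and $1$ are no-ops and hand control to instruction $2$ with counters still $(0,0)$, i.e.\ to the simulated initial configuration of $\mathcal M'$. Second, while $\mathcal M'$ is running, the body of $\mathcal C'$ is a verbatim copy of $\mathcal M'$ on the shared counters (only the labels are shifted), so the two executions coincide step by step. Third, exactly when $\mathcal M'$ would reach $\mathbf{halt}$ the machine $\mathcal C'$ jumps to $n-2$, whose cleaning stage empties both counters and then halts at $n$ with null counters, and $\mathcal C'$ can reach $n-2$ by no other route. Hence $\mathcal C'$ halts from $(0,0)$ iff $\mathcal M'$ halts from $(0,0)$ iff $\mathcal M$ halts from $(v_1,v_2)$, and since $\mathcal C'$ is computable from $(\mathcal M,v_1,v_2)$ and is normalized by construction, the halting problem for normalized two-counter machines is undecidable.

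The only delicate point is the label bookkeeping: after the $+2$ shift and the re-routing of the halt target one must check that no body instruction ever jumps into the reset prefix or into the interior of the cleaning suffix, and that all body targets obey the $\{0,\dots,n-2\}$ constraint. I expect this to be the main (though routine) obstacle; once the body is seen to simulate $\mathcal M'$ faithfully, everything else is immediate.
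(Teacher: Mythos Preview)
Your proposal is correct and follows essentially the same construction as the paper: prepend the two reset instructions, then the $v_1+v_2$ incrementations to set up the initial values, then the body of the original machine with its halt re-routed to the cleaning suffix. The only slip is the instruction count---after prepending the increments, $\mathcal M'$ has $m+v_1+v_2$ non-halt instructions rather than $m$, so $n=m+v_1+v_2+4$; this is cosmetic and does not affect the argument.
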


\begin{proof}
Let $\mathcal C$ be a two-counter machine with initial values $v_1,v_2$,
 one builds the normalized two-counter machine $\mathcal C_{v_1,v_2}$ by adding after the two first instructions
 of a normalized  two-counter machine, $v_1$ incrementations of $c_1$ followed by $v_2$ incrementations of $c_2$
 followed by  the instructions $\mathcal C$ where the halting instruction has been replaced by  the last three instructions
 of a normalized  two-counter machine. The normalized two-counter machine $\mathcal C_{v_1,v_2}$ halts if and only if 
  $\mathcal C$ with initial values $v_1,v_2$ halts.
  
\end{proof}  

\undecPCM*
\begin{proof}
Let us reduce the halting problem of normalized counter programs
with counters $c_1$, $c_2$ to the decisiveness of pCM.
Let $\mathcal P$ be a normalized counter program  with set of instructions $\{0,\ldots,n\}$. 
One builds $\mathcal C$ as follows.
\begin{itemize}
	\item $Q=\{0,\ldots,n\}$;
	\item $P=\{c_1,c_2\}$;
	\item For all $0\leq i<n$, $i \xrightarrow{(0,0),(1,0)} 0 \in \Delta_1$ with constant weight 4.\\
	All other transitions have constant weight 1;
	\item For all instruction $i:c_j \leftarrow c_j+1; \mathbf{ goto~} i'$,
	$i \xrightarrow{(0,0),(1_{j=1},1_{j=2})} i' \in \Delta_1$;
	\item For all instruction $i:\mathbf{if~} c_j>0 \mathbf{~then~} c_j \leftarrow c_j-1; \mathbf{goto~} i'$, $\mathbf{else\ goto~} i''$,\\
	$i \xrightarrow{(1_{j=1},1_{j=2}),(0,0)} i' \in \Delta_1$ and $i \xrightarrow{c_j,(0,0)} i'' \in \Delta_0$.
\end{itemize}
Let us consider the decisiveness w.r.t. $s_0=(0,0,0)$ (i.e. the initial instruction
with null counters) and  $A=\{(n,0,0)\}$ (i.e. the halt instruction
with null counters). 

\noindent
$\bullet$ Assume that $\mathcal P$ does not halt. Let us recall that in this case it does not
halt for any initial values of the counters. Thus adding the transitions $i \xrightarrow{(0,0),(1,0)} 0$
that ``restarts the counter machine'' does not change the (qualitative) behaviour: the halt instruction is still unreachable
in $\mathcal C$. So $\mathcal C$ is decisive w.r.t. $s_0$ and $A$.

\noindent
$\bullet$ Assume that $\mathcal P$ halts. Let us recall that in this case it halts
for any initial values of the counters. Thus again adding the transitions $i \xrightarrow{(0,0),(1,0)} 0$
that ``restarts the counter machine'' does not change the (qualitative) behaviour: the halt instruction is reachable from any reachable configuration
of $\mathcal C$. Now consider the evolution of the value of counter $c_1$ starting from $s_1=(0,1,0)$ reached with probability at least $\frac 2 3$ after the first instruction.
When its value is greater than 0, it is incremented with probability at least $\frac 2 3$ and  with probability at most $\frac 1 3$ it can be either unchanged,
incremented or decremented. Observe that since $\mathcal P$ is normalized, as long  as $c_1>0$, $\mathcal C$ cannot reach the halt instruction.
Thus the probability to reach the $halt$ instruction starting from $s_1$ is upper bounded by the probability to reach the state $0$
in this random walk:

\vspace{-3mm}
\begin{center}
  \begin{tikzpicture}[node distance=2cm,->,auto,-latex]
    
   \path (2,0) node[minimum size=0.6cm,draw,circle,inner sep=2pt] (q1) {$0$};
   \path (4,0) node[minimum size=0.6cm,draw,circle,inner sep=2pt] (q2) {$1$};
   \path (6,0) node[minimum size=0.6cm,draw,circle,inner sep=2pt] (q3) {$2$};
   \path (8,0) node[] (q4) {$\cdots$};

   \draw[arrows=-latex'] (q2)-- (3,-0.5) node[pos=1,below] {$\frac 1 3$}--(q1) ;

    \draw[arrows=-latex'] (q2)-- (5,0.5) node[pos=1,above] {$\frac 2 3$}--(q3) ;
    \draw[arrows=-latex'] (q3)-- (5,-0.5) node[pos=1,below] {$\frac 1 3$}--(q2) ;

    \draw[arrows=-latex'] (q3)-- (7,0.5) node[pos=1,above] {$\frac 2 3$}--(q4) ;
    \draw[arrows=-latex'] (q4)-- (7,-0.5) node[pos=1,below] {$\frac 1 3$}--(q3) ;

  \end{tikzpicture}
\end{center}
\vspace{-3mm}
Due to Proposition~\ref{appendix} in Appendix, this probability is less than 1 and so $\mathcal C$ is not decisive w.r.t. $s_0$ and $A$.

\end{proof}

\decdivpda*
\begin{proof}
W.l.o.g. we assume that $s_0=n_\iota$ and $A=\{n_f\}$ with $n_\iota>n_f$. 
The other cases either reduce to this one or do not present difficulties.

\noindent
Such $\mathcal C$ can be described as follows. First, the transitions of $\Delta_0$ are irrelevant and so
we assume that  $\Delta_0=\emptyset$. Since deleting transition $t$ with$(\mathbf{Pre}(t), \mathbf{Post}(t))=(1,1)$  does not change the behaviour of $\mathcal C$,
we are left with two transitions.
 \begin{itemize}[nosep]
	\item $dec$ with $(\mathbf{Pre}(dec), \mathbf{Post}(dec))=(1,0)$;
	\item $inc$ with $(\mathbf{Pre}(inc), \mathbf{Post}(inc))=(1,2)$.
\end{itemize}
Observe
that except for the absorbing state $0$, $\mathcal M_\mathcal C$
is isomorphic to $\mathcal M_1$. Thus we introduce $\rho_n=\frac{W(dec,n)}{W(inc,n)}$ for $n>0$.

\noindent
Let us write $W(dec,n)=\sum_{i\leq d}a_in^i$ and  $W(inc,n)=\sum_{i\leq d'}a'_in^i$.\\
We perform a case study analysis.  

\smallskip\noindent $\bullet$ When:
\begin{itemize}[nosep] 
	\item $d'<d$
	\item or $d'=d$,  $i_0=\max(i \mid a_i\neq a'_i)$ exists and $a_{i_0}>a'_{i_0}$
	\item or $W(dec)= W(inc)$
\end{itemize}
Then there exists $n_0$ such that for all $n \geq n_0$,
$W(inc,n)\leq W(dec,n)$ implying $\rho_n\geq 1$.\\
Thus for all $n\geq n_0$, $\prod_{1\leq m\leq n}\rho_m\geq \prod_{1\leq m\leq n_0}\rho_m$ 
implying $\sum_{n\in \nat}\prod_{1\leq m\leq n}\rho_m=\infty$ yielding decisiveness.

\smallskip\noindent
$\bullet$ {\bf Case} $d'=d$ and $i_0=\max(i \mid a_i\neq a'_i)$ exists and $a_{i_0}<a'_{i_0}$ and
$i_0\leq d-2$. Then there exists $n_0$ and $\alpha>0$ such that for all $n>n_0$,
$\rho_n\geq 1- \frac{\alpha}{n^2}$.
Observe that:
\begin{align*}
 \prod_{n_0<m\leq n} \rho_m &\geq&  \prod_{n_0<k\leq n} 1- \frac{\alpha}{m^2}&\geq& \prod_{n_0<m\leq n} e^{- \frac{2\alpha}{m^2}}
&=& e^{- \sum_{n_0<m\leq n} \frac{2\alpha}{m^2}}
&\geq& e^{- \sum_{n_0<m} \frac{2\alpha}{m^2}}>0
\end{align*}
Thus $\sum_{n\in \nat}\prod_{1\leq m\leq n}\rho_m=\infty$ yielding decisiveness.

\smallskip\noindent
$\bullet$ {\bf Case} $d'=d$ and $i_0=\max(i \mid a_i\neq a'_i)$ exists, $i_0= d-1$ and $0<\frac{a'_{d-1}-a_{d-1}}{a_d}\leq 1$.\\ 
Let $\alpha=\frac{a'_{d-1}-a_{d-1}}{a_d}$.
Then there exists $n_0$ and $\beta>0$ such that for all $n>n_0$,
$\rho_n\geq 1- \frac{\alpha}{n}-\frac{\beta}{n^2}$.
Observe that:\\
\begin{align*} 
\prod_{n_0<m\leq n} \rho_n&\geq\  \prod_{n_0<m\leq n} 1- \frac{\alpha}{m}- \frac{\beta}{m^2}
\geq\ \prod_{n_0<m\leq n} e^{- \frac{\alpha}{m}- \frac{\beta}{m^2}-(\frac{\alpha}{m}- \frac{\beta}{m^2})^2}
\geq \prod_{n_0<m\leq n} e^{- \frac{\alpha}{m}- \frac{\beta'}{m^2}}\\
\mbox{\emph{for some }}\beta'& \\
&=\ e^{-\sum_{n_0<m\leq n}\frac{\alpha}{m}+ \frac{\beta'}{m^2}} 
\geq\ e^{-\sum_{n_0<m} \frac{\beta'}{m^2}}e^{- \alpha\log(n)}=\frac{e^{-\sum_{n_0<m} \frac{\beta'}{m^2}}}{n^\alpha}
\end{align*} 
Thus $\sum_{n\in \nat}\prod_{1\leq m\leq n}\rho_m=\infty$ yielding decisiveness.

\smallskip\noindent
$\bullet$ {\bf Case} $d'=d$ and $i_0=\max(i \mid a_i\neq a'_i)$ exists, $i_0= d-1$ and $\frac{a'_{d-1}-a_{d-1}}{a_d}> 1$.\\ 
Let $\alpha=\frac{a'_{d-1}-a_{d-1}}{a_d}$.
Then there exists $n_0$ and $1<\alpha'<\alpha$ such that
for all $n\geq n_0$,  $\rho_n\leq  1- \frac{\alpha'}{n}$.
Observe that:
\begin{align*}
 \prod_{n_0<m\leq n} \rho_m&\leq\  \prod_{n_0<m\leq n} 1- \frac{\alpha'}{m}
 &\leq\ \prod_{n_0<m\leq n} e^{- \frac{\alpha'}{m}}
&=\ e^{-\sum_{n_0<m\leq n}\frac{\alpha'}{m}}\\
&\leq\ e^{- \alpha'(\log(n+1)-\log(n_0+1))}
&=\ \frac{e^{ \alpha'\log(n_0+1)}}{(n+1)^{\alpha'}}
\end{align*}
Thus $\sum_{n\in \nat}\prod_{1\leq m\leq n}\rho_m<\infty$ implying non decisiveness.

\smallskip\noindent
$\bullet$  When (1) $d=d'$ and $a'_d>a_d$ or (2) $d<d'$.\\
Then there exists $n_0$ and $\alpha<1$ such that
for all $n\geq n_0$,  $\rho_n \leq \alpha$.\\
Thus $\sum_{n\in \nat}\prod_{1\leq m\leq n}\rho_m<\infty$ implying non decisiveness.

\smallskip\noindent
This concludes the proof that the decisiveness is decidable for $\mathcal C$. 

\end{proof}

\qualpHM*
\begin{proof}
We first observe that if  $Q\times \{0\}$ is reachable from $(q,k)$
then it is reachable from $(q,k')$ for all $k'\leq k$.
Let $Reach_n$ be the set of configurations whose length of a shortest path 
to $Q \times \{0\}$ is $n$ 
. Consider a configuration $(q,k) \in Reach_{n}$
with $n\geq |Q|$.
A shortest path from $(q,k)$ to $Q \times \{0\}$ visits twice some state $q'$
and thus this path includes a circuit of  transitions in $\Delta_1$, $q_0\xrightarrow{1,v_1}q_1\cdots \xrightarrow{1,v_m} q_m$
with $q_0=q_m=q'$ and $\sum_{i\leq m} v_i<m$. We call such a circuit, a \emph{negative circuit}.

\noindent
$\bullet$ Let $q\in Q$. Assume that starting from $q$, there is a sequence of  transitions in $\Delta_1$
reaching a negative circuit. Observe that  $Q\times \{0\}$ is reachable from any $(q,k)$
by iterating the circuit.
On the other hand, the length of such a shortest sequence is at most $|Q|-1$
and the length of the circuit is at most $|Q|$. Thus $(q,|Q|)$ belongs to $Reach_n$
for some $n \in [|Q|,(2|Q|-1)|Q|]$.

\noindent
$\bullet$ Let $q\in Q$. Assume that starting from $q$, there is no sequence of  transitions in $\Delta_1$
reaching a negative circuit. So no $(q,k)$ may appear in $Reach_n$ for $n\geq |Q|$.

\noindent
Thus the algorithm proceeds as follows. It computes incrementally $Reach_n$
for $n\leq (2|Q|-1)|Q|$. Let $q\in Q$ and $s_q$ be the maximum $k$ 
such that $(q,k)$ occurs in some $Reach_{n_q}$. If $n_q<|Q|$ then $r_q=s_q$, 
otherwise $r_q=\infty$.

\end{proof}


\end{document}